\definecolor {processgray}{cmyk}{0,0,0,0.5}
\newcommand{\items}[0]{\ensuremath{\mathcal{I}}\xspace}
\newcommand{\order}[0]{\ensuremath{\prec}\xspace}
\newcommand{\lang}[0]{\ensuremath{\mathcal{L}}\xspace}
\newcommand{\database}[0]{\ensuremath{\mathcal{D}}\xspace}
\newcommand{\trans}[0]{\ensuremath{t}\xspace}
\newcommand{\litteral}[0]{\ensuremath{X}\xspace}
\newcommand{\card}[1]{\ensuremath{{|{#1}|}}\xspace}
\newcommand{\weight}[0]{\ensuremath{\omega}\xspace}
\newcommand{\utilityTrans}[0]{\ensuremath{u_{\database}}\xspace}
\newcommand{\utility}[0]{\ensuremath{\mathbb{U}_{\lang}}\xspace}
\newcommand{\quantity}[2]{\ensuremath{q(#1,#2)}\xspace}
\newcommand{\price}[1]{\ensuremath{p(#1)}\xspace}
\newcommand{\prob}[0]{\ensuremath{\mathbb{P}}\xspace}
\newcommand{\Cnk}[2]{\ensuremath{\binom{#1}{#2}}\xspace}
\newcommand{\maxLen}[0]{\ensuremath{M}\xspace} 
\newcommand{\minLen}[0]{\ensuremath{\mu}\xspace} 
\newcommand{\utilityLength}[2]{\ensuremath{\mathbb{L}_{#1}(#2)}\xspace}
\newcommand{\uLen}[0]{\ensuremath{\ell}\xspace}
\newcommand{\measure}[0]{\ensuremath{m}\xspace}
\newcommand{\pattern}[0]{\ensuremath{\varphi}\xspace} 
\newcommand{\complexity}[0]{\ensuremath{O}\xspace}  
\newcommand{\vutu}[0]{\ensuremath{\mathcal{V}}\xspace}  
\newcommand{\totalwu}[0]{\ensuremath{W}\xspace}
\newcommand{\knowledgeGraph}[0]{\ensuremath{\mathcal{K}}\xspace}  
\newcommand{\TBox}[0]{\ensuremath{\mathcal{T}}\xspace}  
\newcommand{\ABox}[0]{\ensuremath{\mathcal{A}}\xspace}
\newcommand{\predicate}[0]{\ensuremath{P}\xspace}
\newcommand{\subjectSet}[0]{\ensuremath{\mathcal{S}}\xspace}  
\newcommand{\objectSet}[0]{\ensuremath{\mathcal{O}}\xspace}
\newcommand{\predicateWeight}[0]{\ensuremath{w}\xspace}
\newcommand{\floor}[1]{\lfloor #1 \rfloor}
\newcommand{\myalgo}[0]{\ensuremath{{\textrm{\sc QPlus}}}\xspace}
\newcommand{\myalgoOff}[0]{\ensuremath{{\textrm{\sc QPlusDisk}}}\xspace}
\newcommand{\HAISampler}[0]{\ensuremath{{\textrm{\sc HAISampler}}}\xspace}
\newcommand{\HUPSampler}[0]{\ensuremath{{\textrm{\sc HUPSampler}}}\xspace}
\newcommand{\KGSUMVIZ}[0]{\ensuremath{{\textrm{\sc KGsumviz}}}\xspace}
\newcommand{\Bootstrap}[0]{\ensuremath{{\textrm{\sc Bootstrap}}}\xspace}
\newcommand{\DOREMUS}[0]{\texttt{DOREMUS}\xspace}
\newcommand{\BENICULTURALI}[0]{\texttt{BENICULTURALI}\xspace}
\newcommand{\DBpedia}[0]{\texttt{DBpedia}\xspace}
\newcommand{\retail}[0]{\texttt{Retail}\xspace}
\newcommand{\bmsDeux}[0]{\texttt{BMS2}\xspace}
\newcommand{\chainstoreFIM}[0]{\texttt{ChainstoreFIM}\xspace}
\newcommand{\USCensus}[0]{\texttt{USCensus}\xspace}
\newcommand{\kosarak}[0]{\texttt{Kosarak}\xspace}
\newcommand{\tfiftyItenDtwentyM}[0]{\texttt{t17Mi20Md}\xspace}
\newtheorem{definition}{Definition}
\newtheorem{example}{Example}
\newtheorem{property}{Property}
\newtheorem{theorem}{Theorem}
\begin{document}
\title{
    Scalable Sampling for High Utility Patterns 
}%

 \author{
     \IEEEauthorblockN{Lamine Diop\IEEEauthorrefmark{1}, Marc Plantevit\IEEEauthorrefmark{1}}
     \IEEEauthorblockA{
         \IEEEauthorrefmark{1}EPITA Research Laboratory (LRE), Le Kremlin-Bicetre, Paris FR-94276, France, 
         firstname.lastname@epita.fr
     }
}

\maketitle

\begin{abstract}
Discovering valuable insights from data through meaningful associations is a crucial task. However, it becomes challenging when trying to identify  representative  patterns in quantitative databases, especially with large datasets, as enumeration-based strategies struggle due to the vast search space involved. To tackle this challenge, output space sampling methods have emerged as a promising solution thanks to its ability to discover valuable patterns with reduced computational overhead.  However, existing sampling methods often encounter limitations when dealing with large quantitative database, resulting in scalability-related challenges. In this work, we propose a novel high utility pattern sampling algorithm and its on-disk version both designed for large quantitative databases based on two original theorems. Our approach ensures both the interactivity required for user-centered methods and strong statistical guarantees through random sampling. Thanks to our method, users can instantly discover relevant and representative utility pattern, facilitating efficient exploration of the database within seconds. To demonstrate the interest of our approach, we present a compelling use case involving archaeological knowledge graph sub-profiles discovery. Experiments on semantic and none-semantic quantitative databases show that our approach outperforms the state-of-the art methods.
\end{abstract}

\begin{IEEEkeywords}
Knowledge discovery, Output pattern sampling, High utility itemset
\end{IEEEkeywords}

\section{Introduction}
\label{sec:Introduction}

Exploratory Data Analysis (EDA) has faced significant challenges due to the increasing prevalence of large, unfamiliar, and complex datasets. Recently, researchers have proposed methods to construct and explore database profiles for knowledge graphs \cite{alva2021abstat,PrincipeMPCS22,DiopJOCCH23} for an interactive mining and visualization. These profiles effectively capture vital information from the knowledge graph by assigning weights to the predicates connecting two classes (or links between nodes) based on the number of instances belonging to these classes. Consequently, the resulting profile offers an easier-to-understand and interpret representation for end-users\footnote{\url{http://abstat.disco.unimib.it/home} and \url{https://github.com/DTTProfiler/DTTProfiler}} and more easier when it is presented as a network. Networks often provide interactive capabilities, allowing analysts to zoom, pan, and filter the visualization, but, visualizing profiles derived from large knowledge graphs remains a challenging task when employing existing online tool. This is because interactivity empowers users to explore patterns at various levels of granularity while taking account the predicate utilities and the user-defined weights as an external utility. In EDA, the discovery of useful weighted patterns in database with weighted items, called quantitative database (qDB), is known as High Utility Pattern (HUP) and high average-utility (HAUP) mining \cite{Ma2021,GargM0GK23}. This approach extends frequent pattern mining \cite{Agrawal:Apriori94, Han2006MiningFP} to qDB, taking into account both the item quality and quantity in a transaction. HUP mining enhances the interpretability of complex data relationships while visualization techniques, like networks, effectively convey the essence of HUPs as shown in Figure \ref{fig:qplusFrame}. However, it requires rapid, iterative information exchange between the system and the user \cite{Leeuwen14}. Traditional techniques often struggle to meet this time constraint due to the exponential growth in data volume and the output size that can become impractical for human users to process effectively \cite{Qiu2014YAFIMAP, Zaiane2001FastPA}. Other methods, such as those based on condensed representation or top-k patterns, aim to identify optimal patterns but frequently encounter challenges in maintaining pattern diversity \cite{Tseng2015, Tseng2016}.

\begin{figure}
	\centering
	\includegraphics[width=0.48\textwidth]{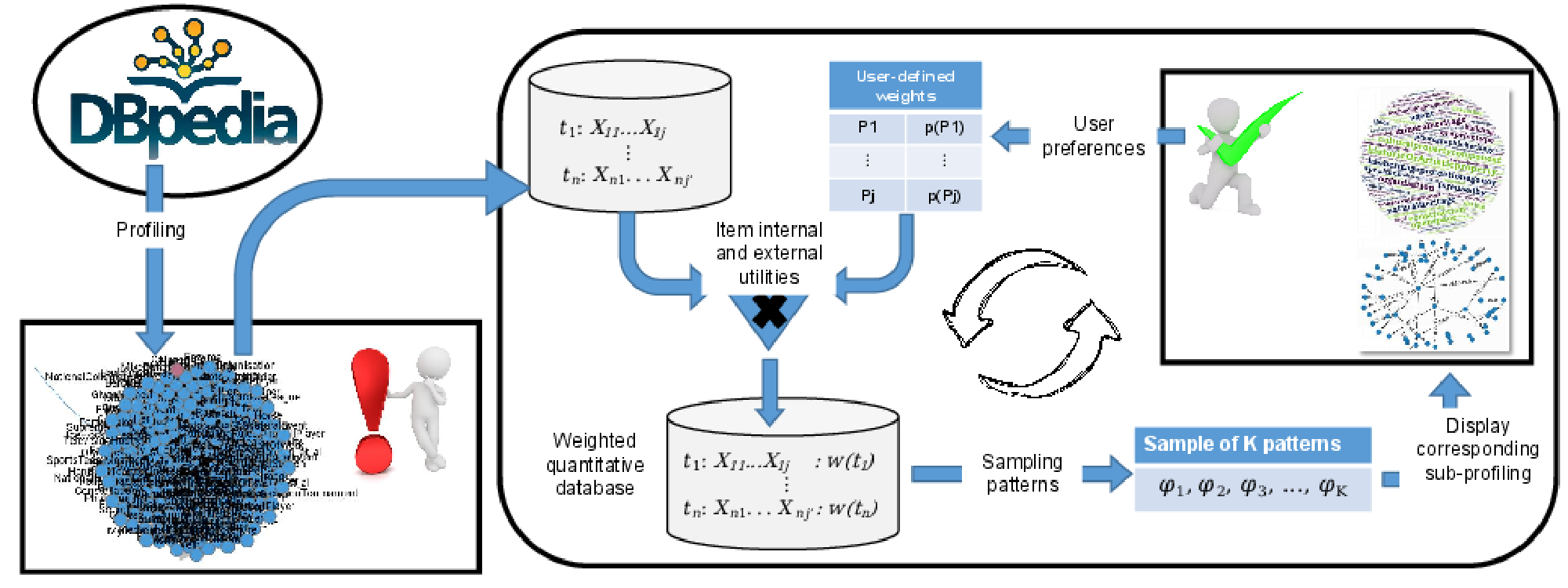}
	\caption{Discovery of sub-profiles in knowledge graphs}
	\label{fig:qplusFrame}
\end{figure}

To overcome these obstacles, researchers have proposed new approaches such as sampling-based approximate mining methods \cite{RiondatoKDD18,PretiKDD21} and output sampling methods \cite{al2009output,BLPG11,diopICDM18,DiopIEEEBigData22}. These innovative techniques enable the extraction of meaningful patterns based on a probability distribution, ensuring both representativeness and control over the output space. Despite their widespread application in frequent pattern mining tasks, they have seen limited utilization within the domain of qDB \cite{DiopPAKDD2022,HUPSampler2023}. This discrepancy can be attributed to the intricate and memory-intensive nature of the weighting phase involved in handling transactions within qDB. Consequently, the complexity of these methods hinders their scalability for large databases. This limitation becomes particularly problematic when exploring very large databases and motivate us to solve the following problems: \textit{How can the response time of a system be improved for knowledge exploration when the system is facing memory constraints?}

The main contributions of our work are as follow:

\begin{itemize}
    \item \myalgo, an efficient algorithm, is proposed to sample HUP/HAUP with or without an interval constraint on the length. 
    We show that \myalgo performs an exact pattern sampling, and we analyze its complexity. For qDBs that cannot fit in memory, \myalgoOff, an on-disk version, is proposed. Experiments on several large real qDBs show that our approach is efficient enough to return hundreds of HUP/HAUP per second where current approaches fail.

    \item We propose two new and original theorems to efficiently compute the total sum of the utility patterns of a given quantitative transaction. In a practical sense, our offering is the concept of Upper Triangle Utility (UTU), a non-materialized upper triangle matrix (storage complexity ``$\complexity(UTU)=0$'') that can be used to deduce the weights of any given quantitative transaction. Based on the theorems and properties stemming from the UTU, we solve the problem of exploring large qDB.

    \item Finally, we show their application in weighted knowledge graph profiles to uncover representative sub-profiles. Let us notice right now that constructing sub-profiles through sampling high utility patterns requires transforming the corresponding knowledge graph profile into a qDB.

\end{itemize}

The outline of this paper is as follows. Section~\ref{sec:Related} reviews related works about HUP/HAUP mining approaches and pattern sampling methods. Section~\ref{sec:Problem} introduces basic definitions and the formal problem statement. We present in Section~\ref{sec:Contributions} our contribution for pattern sampling in qDB. We evaluate our approach in Section~\ref{sec:Experiments} and conclude in Section~\ref{sec:Conclusion}.

\section{Related Work}
\label{sec:Related}
This section presents the HUP/HAUP mining in qDBs literature and output space pattern sampling.

\subsection{HUP and HAUP mining}
Mining HUP/HAUP from qDB poses challenges, including restricting candidate patterns and the computational cost of utility calculation. Various efficient methods have been proposed to address these challenges \cite{Yao04afoundational,Shie2011,Singh2019}. However, exhaustive mining scalability is hindered for large-scale databases with diverse information. The increasing complexity with information diversity makes exploring such databases challenging. Exact HUP mining faces the long tail problem, where low-weighted items in lengthy patterns unexpectedly exhibit high utilities. Approaches like average utility measures \cite{LIN2016} aim to tackle this, but may inadvertently favor specific pattern lengths, leading to biased results. Despite challenges, HUP mining is crucial for extracting valuable insights from qDBs. To enhance efficiency, novel algorithms and sampling techniques \cite{DiopPAKDD2022,HUPSampler2023} have been proposed, offering unique strengths. By incorporating these advancements, researchers can better handle qDB complexities, improving data mining and decision-making processes.

\subsection{Output space HUP/HAUP sampling}
Recently, the \HAISampler algorithm \cite{DiopPAKDD2022} extends output pattern sampling to qDB. Despite its success, scalability in large databases is hindered by intensive weight calculations, making it computationally demanding and memory-intensive. More recently, \HUPSampler \cite{HUPSampler2023}, a hybrid approach, extracts High Utility Patterns (HUP) from qDBs, introducing interval constraints and a random tree-based pattern growth technique. While similar to \HAISampler\cite{DiopPAKDD2022}, \HUPSampler uniform pattern drawing during sampling can compromise representativeness while the phase of building a tree remains time consuming with large databases.

In this paper, we propose a novel pattern sampling method addressing scalability challenges in large qDBs. Our approach maintains efficient and exact sampling with the flexibility to work with or without length constraints. By drawing each pattern with a probability proportional to its utility, we achieve accurate and representative pattern extraction without the need for intensive item weighting. This makes our approach more suitable for large-scale qDBs, preserving performance and efficiency even they exceed main memory capacity.
\section{Problem Statement}
\label{sec:Problem}

This section introduces fundamental concepts and notations, providing the necessary definitions to help readers.

\subsection{Basic definitions and notations}
A knowledge graph is a semantic graph combining a TBox (Terminological Box) and an ABox (Assertional Box). The TBox defines the schema, including classes, properties, and constraints. The ABox contains the actual data, comprising instances of classes and the relationships between them. 
\paragraph{From knowledge graph profile to qDB} Given a knowledge graph \knowledgeGraph built upon a TBox \TBox and an ABox \ABox, i.e., $\knowledgeGraph = (\TBox, \ABox)$, a profile is defined as a set of pairs $((\subjectSet, \predicate, \objectSet), \predicateWeight)$ such that:
\subjectSet is a set of subjects that share a same number of \predicateWeight instances of the set of objects \objectSet via the predicate \predicate. In other words, a profile encodes the types (elements of \TBox) and the terms (instances of thesauri) of a knowledge graph into a compact structure while aggregating their statistics like the number of triples. 

\begin{example}
Let $P_1$=``r:represent'', $P_2$=``r:encompass'', $P_3$=``r:involve'' be predicates, $C_1$=``r:Encounter\_Event'', $C_2$=``r:Man-Made'', $C_3$=``r:Document'', $C_4$=``r:Site'' concepts, and $e_1$=``anastylosis'', $e_2$=``geoarchaeology'', $e_3$=``excavation'' terms. We build in Figure \ref{fig:toy_graphMaxNodes} a toy profile.

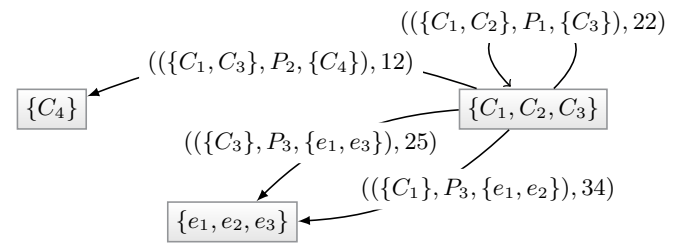
\begin{figure}[htp!]
    {\small
	\begin{center}
		\begin{tikzpicture}[-latex ,auto ,node distance =0 cm and 6.4cm ,on grid ,
		semithick ,
		state/.style ={ top color =white , bottom color = processgray!20 ,
			draw,processgray , text=black , minimum width =0.5cm}, transform shape,mylabel/.style={thin, draw=white, align=center, minimum width=0.5cm, minimum height=0.5cm,fill=white}]
		\node[state] (n1) {$\{C_1,C_2,C_3\}$};
		\node[state] (n2) [above left=of n1] {$\{C_4\}$}; 
		\node[state] (n3) [below left=1.5cm and 4cm of n1] {$\{e_1, e_2, e_3\}$};
		
		\path (n1) edge [loop] node[above =-0.3 cm, mylabel] {$((\{C_1, C_2\},P_1,\{C_3\}), 22)$} (n1);
		\path (n1) edge [bend left =20] node[below left=-.1 cm and -1cm of n1, mylabel] {$((\{C_1\},P_3,\{e_1,e_2\}), 34)$} (n3);	
		\path (n1) edge [bend left =-20] node[below left=.1cm and .5cm of n1, mylabel] {$((\{C_3\},P_3,\{e_1,e_3\}), 25)$} (n3);
		\path (n1) edge [bend left =-20] node[above =-.4 cm, mylabel] {$((\{C_1, C_3\},P_2,\{C_4\}), 12)$} (n2);	
		
		\end{tikzpicture}
	\end{center}
	\caption{Toy knowledge graph profile}
	\label{fig:toy_graphMaxNodes}
    }
\end{figure}

\end{example}

 MCMC-based methods \cite{al2009output}  have been used to address output sampling in unweighted graph data, but they often suffer from time-consuming convergence. An alternative, Random Sampling, lacks guarantees on representativeness, as our baseline \Bootstrap. To overcome these problems, our transformation approach effectively captures the semantic relations within the entire knowledge graph.

To apply our approach, we need first to convert the knowledge graph profile into a qDB \database. Therefore, we consider the weighted items of each quantitative transaction as the set of pairs $((\subjectSet, \predicate, \objectSet), \predicateWeight)$ having a same node source (or target) in \knowledgeGraph. 
Let $\mathcal{I}=\{\litteral_1, \cdots, \litteral_N\}$ be a finite set of items with an arbitrary total order $\order$ between them: $\litteral_1 \order \cdots \order \litteral_N$. A pattern, denoted as $\pattern=\litteral_{j_1} \cdots \litteral_{j_n}$, where $n\leq N$, refers to a non-empty subset of $\mathcal{I}$, i.e., $\pattern \subseteq \mathcal{I}$. The pattern language corresponds to $\mathcal{L} = 2^{{\mathcal{I}}}\setminus\emptyset$, and the length of a pattern $\pattern=\litteral_{j_1} \cdots \litteral_{j_n}$, $\pattern \in \mathcal{L}$, denoted as $\card{\pattern}=n$.

\begin{definition}[Semantic quantitative transaction and qDB]
    A semantic quantitative transaction is a set of weighted items $\trans=\{\litteral_{j_1}\!\!:\!\!\weight_{j_1}  \cdots \litteral_{j_n}\!\!:\!\!\weight_{j_n}\}$ sharing either the same node source or the same node target in the knowledge graph profile, with $\weight_{j_k}$ the quantity of $\litteral_{j_k}$ in $\trans$ denoted by $\quantity{\litteral}{\trans}$. 
A qDB $\mathcal{D}$ corresponds to a multi-set of quantitative transactions. 
\end{definition}

We denote $\trans^i=\litteral_{j_1} \cdots \litteral_{j_i}$ as an itemset formed by the $i$ first items of $\trans$. Thus, we have $\card{\trans^i}=i$. The $i^{th}$ item of the transaction $\trans$ is denoted by $\trans[i]$.

 \begin{example} 
 
 Here are two quantitative transactions that can be generated from the toy profile in Figure \ref{fig:toy_graphMaxNodes}: $\trans_1=\{\litteral_1\!\!:\!\!22, \litteral_2\!\!:\!\!12, \litteral_3\!\!:\!\!25, \litteral_4\!\!:\!\!34\}$ with outgoing predicates, and $\trans_2=\{\litteral_1\!:\!22\}$ with incoming predicates, with $\litteral_1=(\{C_1, C_2\},P_1,\{C_3\})$, $\litteral_2=(\{C_1, C_3\},P_2,\{C_4\})$, $\litteral_3=(\{C_3\},P_3,\{e_1,e_3\})$, $\litteral_4=(\{C_1\},P_3,\{e_1,e_2\})$. The remaining transactions are $\trans_3=\{\litteral_2\!\!:\!\!12\}$ and $\trans_4=\{\litteral_3\!\!:\!\!25, \litteral_4\!\!:\!\!34\}$, then our toy qDB is $\database=\{\trans_1, \trans_2, \trans_3, \trans_4\}$.

\end{example}


\paragraph{User-defined Weights}
In refining subprofile discovery through pattern sampling, we incorporate user-defined strict positive weights for each predicate $P$ of a given item $\litteral$ denoted by $\price{\litteral}$, allowing users to tailor and bias the subprofile view based on individual priorities. This addition provides flexibility to our profile visualization tool, enabling users to customize displayed patterns for a personalized and adaptable exploration experience. Let us consider, in our example, that $\price{\litteral_1}=2$, $\price{\litteral_2}=1$, and $\price{\litteral_3}=\price{\litteral_4}=3$.

\begin{definition}[Pattern utility in a transaction]
\label{def:patternWeightTrans}
Given a transaction \trans of a qDB \database, the weight of the item \litteral in \trans, denoted as $\weight(\litteral, \trans)$, is given by: $\weight(\litteral, \trans)= \quantity{\litteral}{\trans} \times \price{\litteral}.$ 
The utility of a pattern \pattern of $\lang(\database)$ in \trans is defined as follows:

$\utilityTrans(\pattern, \trans)  = \sum_{\litteral \in \pattern} \weight(\litteral, \trans) \text{ if } \pattern \subseteq \trans  \text{ and } 0 \text{ otherwise.}$
\end{definition}

\paragraph{From utility patterns to knowledge graph sub-profile} After that, each sampled pattern is converted into a sub-profile graph where the nodes sharing at least one type or term are merged to form a maximal profile as defined in \cite{DiopJOCCH23}. For instance, if the pattern $\pattern=\litteral_2\litteral_4$ is drawn, then it corresponds to the knowledge graph sub-profile in Figure \ref{fig:subprofile}. 

\begin{figure}[htp!]
    {\small
    \begin{center}
    \begin{tikzpicture}[-latex ,auto ,node distance =0 cm and 6.5cm ,on grid ,
    semithick ,
    state/.style ={ top color =white , bottom color = processgray!20 ,
        draw,processgray , text=black , minimum width =0.5cm}, transform shape,mylabel/.style={thin, draw=white, align=center, minimum width=0.5cm, minimum height=0.5cm,fill=white}]
    \node[state] (n1) {$\{C_1,C_3\}$};
    \node[state] (n2) [above left=0.5cm and 6cm of n1] {$\{C_4\}$}; 
    \node[state] (n3) [below left=0.5cm and 6cm of n1] {$\{e_1, e_2\}$};
    
    \path (n1) edge [bend left =5] node[above left=-.1 cm and .2cm of n1, mylabel] {$((\{C_1\},P_3,\{e_1,e_2\}), 34)$} (n3);	
    \path (n1) edge [bend left =-5] node[above left=0.1cm and .2cm of n1, mylabel] {$((\{C_1, C_3\},P_2,\{C_4\}), 12)$} (n2);	
    
	\end{tikzpicture}
	\end{center}
\caption{Sub-profile from the pattern $\litteral_2\litteral_4$}
\label{fig:subprofile}
}
\end{figure}

Based on the previous basic notions, we can formalize the problem of output space pattern sampling in qDBs.

\subsection{HUP/HAUP sampling problem}
    By definition, a pattern sampling method aims to randomly select a pattern \pattern from a language \lang based on an interestingness measure \measure. The notation $\pattern \sim \prob(\lang)$ represents the selection such a pattern, where $\prob(\cdot) = \measure(\cdot) / Z$ is a probability distribution over \lang, and $Z$ is the normalization constant.
    In our case, we specifically focus on high utility as an intuitive measure of interestingness, which allows experts to capture the most representative subsets in the qDB.

\begin{definition}[Utility of a pattern]
\label{def:patternutility}
    Given a qDB \database and a pattern \pattern defined in $\lang(\database)$, the utility of \pattern in \database, denoted as $\utility(\pattern, \database)$, is defined as follows: $\utility(\pattern, \database) = \sum_{\trans \in \database} \utilityTrans(\pattern, \trans).$
\end{definition}

\noindent For generality, we additionally consider utilities that are independent of any specific database, referred to as the length-based utility \cite{DiopPAKDD2022}. We use $\utilityLength{[\minLen..\maxLen]}{\cdot}: \mathbb{N}^* \to \mathbb{R}^+$, where \minLen and \maxLen are two positive integers with $\minLen \leq \maxLen$, to define the length-based utility function. In any case, when this function is used, $\utilityLength{[\minLen..\maxLen]}{\uLen} = 0$ if $\uLen \not\in [\minLen..\maxLen]$. Consequently, if $\utilityLength{[\minLen..\maxLen]}{\uLen} = \frac{1}{\uLen}$, then we are dealing with the sub-problem of HAUP under length constraint \cite{DiopPAKDD2022}, where the utility of a pattern \pattern is defined as its average utility $\frac{\utility(\pattern, \database)}{\card{\pattern} }$ if $\card{\pattern} \in [\minLen..\maxLen]$, and $0$ otherwise. 
With HUP sampling, we set the length-based utility function to $1$ for any pattern in $\lang(\database)$. The maximum length constraint can be set to infinity ($+\infty$) to avoid length constraints. Let $\lang_{[\minLen..\maxLen]}(\database)$ denote the set of all patterns in $\lang(\database)$ with a length greater than \minLen and lower than \maxLen, i.e., $\lang_{[\minLen..\maxLen]}(\database)=\{\pattern \in \lang(\database) : \card{\pattern} \in [\minLen .. \maxLen]\}$.

\begin{example}

Let us compute the utility of $\pattern=\litteral_2\litteral_4$ in \database $\utility(\pattern, \database)= \sum _{\trans \in \database} \utilityTrans(\pattern, \trans)$ with $\utilityTrans(\pattern, \trans) = \sum _{\litteral \in \pattern} \weight(\litteral, \trans)$ if $\pattern \subseteq \trans$ and $0$ otherwise. We have $\weight(\litteral, \trans)= \quantity{\litteral}{\trans} \times \price{\litteral}$ with $\quantity{\litteral}{\trans}$ the quantity of \litteral in transaction \trans and $\price{\litteral}$ the user-defined weight of \litteral. In our case, we have $\price{\litteral_2}=1$ and $\price{\litteral_4}=3$. With $\trans_1=\{\litteral_1\!\!:\!\!22, \litteral_2\!\!:\!\!12, \litteral_3\!\!:\!\!25, \litteral_4\!\!:\!\!34\}$, we get $\utilityTrans(\pattern, \trans_1)= (22\times 1) + (34 \times 3) = 124$ while $\utilityTrans(\pattern, \trans_2)= 0$, $\utilityTrans(\pattern, \trans_3)= 0$, and $\utilityTrans(\pattern, \trans_4)= 0$. Therefore, $\utility(\pattern, \database)=124+0+0+0=124$. The average utility of $\pattern$ in \database is $\frac{\utility(\pattern, \database)}{\card{\pattern}} = \frac{124}{2}$. If we consider an interval constraint $[\minLen..\maxLen]$, $\minLen=1$ and $\maxLen=2$, with a length-based utility $\utilityLength{[\minLen..\maxLen]}{\uLen}=\frac{1}{\uLen}$, $\utility(\pattern, \database) \times \utilityLength{[\minLen..\maxLen]}{\card{\pattern}}$ is the average utility of $\pattern$, while $\utilityLength{[\minLen..\maxLen]}{\card{\litteral_2\litteral_3\litteral_4}} =0$ because $\card{\litteral_2\litteral_3\litteral_4}=3 \not\in [\minLen..\maxLen]$. 
\end{example}




Finally, {given a qDB $\database$, two positive integers $\minLen$ and $\maxLen$, our objective is to draw a pattern $\pattern \in \lang_{[\minLen .. \maxLen]}$ with a probability proportional to its weighted utility in the  database $\database$, i.e.,
	$$\prob(\pattern) = \frac{\utility(\pattern, \database) \times \utilityLength{[\minLen .. \maxLen]}{\card{\pattern}}}{\sum_{\pattern' \in \lang_{[\minLen .. \maxLen]}} \utility(\pattern', \database) \times \utilityLength{[\minLen .. \maxLen]}{\card{\pattern'}}}.$$
}


Table \ref{tab:NotationsSummary} summarizes the notations used in this paper.

\begin{table}
	\caption{Notations}	
	\label{tab:NotationsSummary}
	{\small
	\begin{tabular}{l|l}
		\hline
		Symbol & Denotes \\
		\hline
		qDB & Quantitative database\\
		\items & Set of all items $\{\litteral_1, \cdots, \litteral_n\}$ \\
		\lang &  Pattern language $2^{\items}\setminus\emptyset$ \\
		\pattern &  Pattern, $\pattern \in \lang$ \\
		\trans &  Transaction $\trans \subseteq \items$\\
		\database &  Database, a multi-set of transactions \\
		$\trans[i]$ &  The $\text{i}^{\text{th}}$ item of transaction \trans \\
		$\trans^{i}$ & An itemset composed by the $i$ first items of \trans  \\
		$\quantity{\litteral}{\trans}$ & The quantity of item \litteral in transaction \trans \\
		$\price{\litteral}$ & The user-defined weight on item \litteral \\
		\TBox & A TBox called ontology, set of triples \\
		\ABox & An Abox, the instances, set of triples \\
		$\knowledgeGraph$ & A knowledge graph defined by $\knowledgeGraph = (\TBox, \ABox)$\\
		\subjectSet & Subject, a set of concepts (element of \TBox) \\
		\objectSet & Object, a set of concepts or terms \\
		\predicate & A predicate, used to link two instances of \ABox \\
		$\lang(\database)$ &  The set of all patterns of \lang defined in \database \\
		$\weight(\litteral, \trans)$ & The weight of \litteral in \trans defined by $\quantity{\litteral}{\trans} \times \price{\litteral}$ \\
		$\utilityTrans(\pattern, \trans)$ & Utility of pattern \pattern in transaction \trans \\
		$\utility(\pattern, \database)$ & Utility of pattern \pattern in the entire database \database \\
		$\utilityLength{[\minLen..\maxLen]}{\cdot}$ & Length-based utility measure \\
		\minLen & Minimum length constraint \\
		\maxLen & Maximum length constraint \\
		$\vutu_\trans$ & The upper triangle utility of transaction \trans \\
        $\vutu_\trans(\uLen, i)$ & The sum of utilities of all patterns of length \uLen in $\trans^i$ \\
		\hline
	\end{tabular}
	}
\end{table}

\section{Generic utility pattern sampling}
\label{sec:Contributions}
We adopt in this paper the multi-step pattern sampling \cite{BLPG11} technique. Therefore, we first need to perform a preprocessing phase, which involves weighting each transaction.
\subsection{Weighting of a qDB}
\label{sub:counting}

Before introducing our weighting approach, let us first briefly explain our baseline weighting approach.

\paragraph{\HAISampler \cite{DiopPAKDD2022} weighting approach} The transaction weighting phase is one of the most fundamental steps for multi-step pattern sampling. It involves computing the sum of the utility of all the patterns in a given transaction. \HAISampler uses a weighting matrix to compute the weight of a length \uLen, which represents the aggregated weights of the set of patterns of length \uLen that appear in the transaction. Let us now explain how this weighting approach works according to the arbitrary total order relation \order.
For a given transaction \trans, each item $\trans[i]$ is associated with a local matrix of 2 rows and $L$ columns, where $L$ is the maximal length of the pattern that can be generated with the first $i^{th}$ items ($\trans^i$) according to \order and the maximal length constraint. The first row of this local matrix contains the weight of the patterns with item $\trans[i]$ in transaction $\trans^i$, while the second row contains those patterns that do not contain $\trans[i]$.
The weighting matrix of transaction $\trans_1$ is computed in Table \ref{tab:weightinHAISampler}. For the sake of simplicity, we omit columns containing zeros only. The local matrix at item $\litteral_3=\trans_1[3]$ shows that, in the transaction $\trans_1^3 = \{\litteral_1\!\!:\!\!22, ~\litteral_2\!\!:\!\!12, ~\litteral_3\!\!:\!\!25\}$, the sum of the utilities of the patterns of length 2 with (without) item $\trans_1[3]=\litteral_3$ is $206$ (respectively $56$). The sum of utilities of patterns of length 3 with $\trans_1[3]$ is 131, while there are no patterns of length 3 without $\trans_1[3]$. Finally, the weight of this transaction is calculated as $(34+131)+(233+262)+(364+131)+(165+0)=1320$.

\begin{table}[htp!]
    \centering 
    \caption{Weighting matrix of $\trans_1$ by \HAISampler \cite{DiopPAKDD2022}}
    \label{tab:weightinHAISampler}
    \resizebox{\linewidth}{!}{%
        \begin{tabular}{c c c cc c ccc c cccc}
            $\trans_1:\{$ & \multicolumn{1}{c}{$\litteral_1$}& , & \multicolumn{2}{c}{$\litteral_2$} & , & \multicolumn{3}{c}{$\litteral_3$} & , & \multicolumn{4}{c}{$\litteral_4$} $\quad\}$\\
            
            & \multicolumn{1}{c}{44}
            &  & \multicolumn{1}{c|}{12} & \multicolumn{1}{r}{56}
            &  & \multicolumn{1}{c|}{75} & \multicolumn{1}{r|}{206} & \multicolumn{1}{c}{131}
             &  & \multicolumn{1}{c|}{34} & \multicolumn{1}{r|}{233} & \multicolumn{1}{c|}{364} & \multicolumn{1}{c}{165}\\
            \cline{2-2} \cline{4-5} \cline{7-9} \cline{11-14}
            & \multicolumn{1}{c}{0}
            &  & \multicolumn{1}{c|}{44} & \multicolumn{1}{r}{0}
            &  & \multicolumn{1}{c|}{56} & \multicolumn{1}{r|}{56} & 0
             &  & \multicolumn{1}{c|}{131} & \multicolumn{1}{r|}{262} & \multicolumn{1}{c|}{131} & \multicolumn{1}{c}{0} \\
        \end{tabular}
    }
\end{table}

After weighting all transactions in the database \database, \HAISampler can efficiently draw a sample of patterns where each one is drawn proportionally to its utility in \database. However, in large databases with limited available memory, the storage cost can become prohibitive. Therefore, we propose a new, efficient, and original weighting approach for qDB.

\paragraph{Our weighting approach} The solution we propose in this paper eliminates the need to compute a local matrix for each item. Our main goal for the weighting is to establish a formula that can compute the weight of any transaction in a qDB based on the utility of the transaction, which is the sum of the utility of its items. To achieve this, we introduce the Upper Triangle Utility weighting approach. It utilizes an upper triangle matrix containing the weights of each item, enabling the computation of the transaction weight and an exact draw of a pattern. Specifically, at the intersection of line \uLen and column $i$, we have the sum of the utilities of the set of patterns of length \uLen that appear in the transaction portion $\trans^i$. Here is the formal definition of Upper Triangle Utility.

\begin{definition}[Upper Triangle Utility]
\label{def:VUTU}
Given a quantitative transaction \trans, \uLen and $i$ two integers, with $0<\uLen, i \leq \card{\trans}$. The upper triangle utility of \trans denoted by $\vutu_\trans$ is defined as follows:

$\vutu_\trans(\uLen, i) = 0 ~if ~\uLen > i \text{, \qquad} \vutu_\trans(1, i) = \sum_{j=1}^{i}\weight(\trans[j], \trans)  $ 
$\vutu_\trans(\uLen, i) = \Cnk{i-1}{\uLen-1} \times\weight(\trans[i], \trans) + \vutu_\trans(\uLen-1, i-1) + \vutu_\trans(\uLen, i-1).$

\end{definition}

\noindent The recursive formula can be interpreted easily as follows: on one hand, $\Cnk{i-1}{\uLen-1} \times \weight(\trans[i], \trans) + \vutu_\trans(\uLen-1, i-1)$ gives the aggregate weights of the set of patterns of length $\uLen$ appearing in $\trans$ and containing the item $\trans[i]$. On the other hand, $\vutu_\trans(\uLen, i-1)$ gives that of the set of patterns of length $\uLen$ in $\trans$ without the item $\trans[i]$.

\begin{example}
    Let us compute the upper triangle utility of $\trans_1$ in Table \ref{tab:t5_VUTU} with Definition \ref{def:VUTU}.
    \begin{table}[htp!]
    \centering 
    \caption{Upper triangle utility of transaction $\trans_1$}
    {\small
    \label{tab:t5_VUTU}
		\begin{tabular}{c c c c c c c c c c}
			$\trans_1:\{$ & $\litteral_1$ & , & $\litteral_2$  & , & $\litteral_3$  & , & $\litteral_4$  & $\}$ \\
			
			& \cellcolor{gray!20} 44 &\cellcolor{gray!20} & \cellcolor{gray!20} 56 & \cellcolor{gray!20} & \cellcolor{gray!20} 131 & \cellcolor{gray!20} & \cellcolor{gray!20} 165 & \\
			& 0 & & \cellcolor{gray!20} 56 & \cellcolor{gray!20} & \cellcolor{gray!20} 262 & \cellcolor{gray!20} & \cellcolor{gray!20} 495 & \\
			& 0 & & 0 &  & \cellcolor{gray!20} 131 & \cellcolor{gray!20} & \cellcolor{gray!20} 495 & \\
			& 0 & & 0 & & 0 &  & \cellcolor{gray!20} 165 & \\
		\end{tabular}
    }
    \end{table}
The weight of the corresponding transaction \trans is the sum of the values in column $\card{\trans}$ of $\vutu_\trans$. Then we have $165 + 495 + 495 + 165 = 1320$. Hence we get the same result as \HAISampler.

\end{example}

In reality, the upper triangle utility hides wonderful properties. For example, we have the intuition that $\vutu_\trans(\uLen, i) = \vutu_\trans(i-\uLen+1, i)$. Furthermore, for any couple of values $(\uLen, i)$ such that $\uLen \leq i \leq \card{\trans}$, we get the result $\frac{\vutu_\trans(\uLen, i)}{\vutu_\trans(1, i)} = \binom{i-1}{\uLen-1}$. Hence, we state Theorem \ref{theo:vutu}.

\begin{theorem}
    \label{theo:vutu}
    Let $\vutu_\trans$ be the upper triangle utility of any given a quantitative transaction \trans. For any positive integers \uLen and $i$ such that $\uLen \leq i \leq \card{\trans}$, the following statement holds:
    $\vutu_\trans(\uLen, i) = \Cnk{i-1}{\uLen-1} \times \vutu_\trans(1, i).$
\end{theorem}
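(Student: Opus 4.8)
The plan is to prove the identity by induction on the column index $i$, keeping the row index $\uLen$ free within the range $1 \le \uLen \le i$. The observation driving everything is that the recursive clause of Definition~\ref{def:VUTU} mirrors Pascal's rule for binomial coefficients, so once the claimed closed form is plugged into the recursion, the two lower-order terms should collapse via $\Cnk{i-2}{\uLen-2} + \Cnk{i-2}{\uLen-1} = \Cnk{i-1}{\uLen-1}$, leaving a single binomial coefficient multiplying $\vutu_\trans(1,i)$.

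First I would dispose of the easy cases. For $\uLen = 1$ the statement reads $\vutu_\trans(1,i) = \Cnk{i-1}{0}\,\vutu_\trans(1,i) = \vutu_\trans(1,i)$, which holds by definition for every $i$; this also serves as the base of the induction at $i=1$, where only $\uLen = 1$ is admissible. I would adopt throughout the standard convention $\Cnk{n}{k} = 0$ whenever $k < 0$ or $k > n$, which lets the recursion and the claimed formula agree automatically on the boundary.

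For the inductive step, assume the identity $\vutu_\trans(\uLen', i-1) = \Cnk{i-2}{\uLen'-1}\,\vutu_\trans(1,i-1)$ holds for every admissible $\uLen'$. Fixing $i$ and any $\uLen$ with $2 \le \uLen \le i$, I would expand $\vutu_\trans(\uLen, i)$ using the recursive clause, substitute the induction hypothesis for both $\vutu_\trans(\uLen-1,i-1)$ and $\vutu_\trans(\uLen,i-1)$, and write $\vutu_\trans(1,i) = \vutu_\trans(1,i-1) + \weight(\trans[i],\trans)$. Applying Pascal's rule to the two binomial coefficients multiplying $\vutu_\trans(1,i-1)$ and noting that the coefficient of $\weight(\trans[i],\trans)$ is already $\Cnk{i-1}{\uLen-1}$ by definition, the whole expression reassembles as $\Cnk{i-1}{\uLen-1}\bigl(\vutu_\trans(1,i-1) + \weight(\trans[i],\trans)\bigr) = \Cnk{i-1}{\uLen-1}\,\vutu_\trans(1,i)$, as required.

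The only genuine point of care — the main obstacle, such as it is — is the edge case $\uLen = i$, where the term $\vutu_\trans(\uLen, i-1)$ refers to a cell below the diagonal and is therefore $0$ by the first clause of Definition~\ref{def:VUTU} rather than governed directly by the induction hypothesis. I would verify that the binomial bookkeeping stays consistent there: with $\uLen = i$ the coefficient $\Cnk{i-2}{\uLen-1} = \Cnk{i-2}{i-1}$ vanishes under the convention above, so treating the missing term as $0$ is exactly what the formula predicts, and the induction proceeds uniformly without a separate argument. As an immediate corollary, the symmetry $\vutu_\trans(\uLen, i) = \vutu_\trans(i-\uLen+1, i)$ conjectured before the theorem follows from $\Cnk{i-1}{\uLen-1} = \Cnk{i-1}{i-\uLen}$.
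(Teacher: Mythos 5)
Your proof is correct, and while its algebraic core (substitute the closed form into the recursion, collapse the two binomials via Pascal's rule, reassemble $\vutu_\trans(1,i-1)+\weight(\trans[i],\trans)=\vutu_\trans(1,i)$) is identical to the paper's, your induction is organized differently, and in a way that matters. The paper inducts on the row index $\uLen$: assuming the formula at level $k$, it expands $\vutu_\trans(k+1,i)$ and substitutes not only $\vutu_\trans(k,i-1)=\Cnk{i-2}{k-1}\vutu_\trans(1,i-1)$ but also $\vutu_\trans(k+1,i-1)=\Cnk{i-2}{k}\vutu_\trans(1,i-1)$ --- the latter being an instance of the very statement under proof at level $k+1$, merely at a smaller column. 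Strictly speaking, induction on $\uLen$ alone does not license that substitution; one needs a simultaneous descent in $i$, i.e.\ a lexicographic or column-wise induction. Your choice --- induction on $i$ with the hypothesis quantified over all admissible $\uLen'$ at column $i-1$ --- makes both substitutions legitimate in one stroke and thus repairs this small circularity in the paper's bookkeeping. You also explicitly dispatch the diagonal case $\uLen=i$, where $\vutu_\trans(\uLen,i-1)$ falls below the diagonal and is $0$ by the first clause of the definition, checking that the vanishing convention $\Cnk{i-2}{i-1}=0$ keeps the formula consistent there; the paper passes over this boundary silently. In short: same computation, but your induction scheme is the one that actually carries the argument, at the modest cost of a slightly heavier hypothesis.
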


\begin{sloppypar}
\begin{proof}(Theorem \ref{theo:vutu})
    Let us prove the formula $\vutu_\trans(\uLen, i) = \Cnk{i-1}{\uLen-1} \times \vutu_\trans(1, i)$ by induction, with \uLen and $i$ two positive integers such that $\uLen \leq i \leq \card{\trans}$. We need to show that it holds for the base case $\uLen=1$ and then prove the induction step for $\uLen>1$.

    \noindent {\bf Base case (\uLen = 1):} For $\uLen=1$, we have $\vutu_\trans(1, i)=\sum_{j=1}^{i}\weight(\trans[j], \trans)$. According to Definition \ref{def:VUTU}, we can rewrite this as: $\vutu_\trans(1, i) = \Cnk{i-1}{1-1} \times \vutu_\trans(1, i)$. Since $\Cnk{i-1}{0}=1$, we can see that the base case holds.

    \noindent {\bf Inductive Step:} Assume that the formula holds for some $\uLen=k$, i.e., $\vutu_\trans(k, i) = \Cnk{i-1}{k-1} \times \vutu_\trans(1, i)$. We will prove that it holds for $\uLen=k+1$, i.e., $\vutu_\trans(k+1, i) = \Cnk{i-1}{k} \times \vutu_\trans(1, i)$.
    Using the recursive equation we provided in Definition \ref{def:VUTU}, we have: $\vutu_\trans(k+1, i) = \Cnk{i-1}{k} \times\weight(\trans[i], \trans) + \vutu_\trans(k, i-1) + \vutu_\trans(k+1, i-1)$. Substituting the induction hypothesis $\vutu_\trans(k, i) = \Cnk{i-1}{k-1} \times \vutu_\trans(1, i)$ and simplifying, we get: $\vutu_\trans(k+1, i) = \Cnk{i-1}{k} \times\weight(\trans[i], \trans) + \Cnk{i-2}{k-1}\times \vutu_\trans(1, i-1) + \Cnk{i-2}{k}\times \vutu_\trans(1, i-1)$. This leads to $\vutu_\trans(k+1, i) = \Cnk{i-1}{k} \times\weight(\trans[i], \trans) + (\Cnk{i-2}{k-1} + \Cnk{i-2}{k})\times \vutu_\trans(1, i-1)$. We know that $\Cnk{i-2}{k-1} + \Cnk{i-2}{k} = \Cnk{i-1}{k}$ based on the binomial coefficients formula. Applying this identity, we have: $\vutu_\trans(k+1, i) = \Cnk{i-1}{k} \times\weight(\trans[i], \trans) + \Cnk{i-1}{k}\times \vutu_\trans(1, i-1)$. Therefore, we get: $\vutu_\trans(k+1, i) = \Cnk{i-1}{k} \times (\weight(\trans[i], \trans) + \vutu_\trans(1, i-1))$. We know by definition that $\weight(\trans[i], \trans) + \vutu_\trans(1, i-1) = \vutu_\trans(1, i)$. Then, $\vutu_\trans(k+1, i) = \Cnk{i-1}{k} \times \vutu_\trans(1, i)$. Thus, the equation holds for $\uLen=k+1$. By proving the base case and the induction step, we have shown that the formula $\vutu_\trans(\uLen, i) = \Cnk{i-1}{\uLen-1} \times \vutu_\trans(1, i)$ holds for all \uLen by induction.
\end{proof}
\end{sloppypar}

In addition, we can also state the following property.


\begin{property}[Symmetry]
\label{prop:VUTU_symmetry}
    If $\vutu_\trans$ is the upper triangle utility of a given transaction \trans, and \uLen and $i$ two positive integers such that $\uLen \leq i \leq \card{\trans}$ then we have:
    $\vutu_\trans(\uLen, i) = \vutu_\trans(i-\uLen+1, i).$
\end{property}

\begin{proof}(Property \ref{prop:VUTU_symmetry})
    We know that $\vutu_\trans(\uLen, i) = \binom{i-1}{\uLen-1} \times \vutu_\trans(1, i)$ holds for all $\uLen$ according to Theorem \ref{theo:vutu}. We can derive $\vutu_\trans(i-\uLen+1, i) = \binom{i-1}{i-\uLen+1-1} \times \vutu_\trans(1, i) = \binom{i-1}{i-\uLen} \times \vutu_\trans(1, i)$. Since $\binom{i-1}{\uLen-1} = \binom{i-1}{i-\uLen}$, this property becomes trivial and holds true.
\end{proof}

\noindent Property \ref{prop:VUTU_symmetry} is particularly useful in the preprocessing and the drawing phases, especially when the maximal length constraint is either not used or set higher than half of the transaction length. Therefore, it is used automatically in our approach. In the case where there is no maximal length constraint (i.e., $\maxLen=\infty$) while the minimal one is set to $\minLen=1$, Theorem \ref{theo:totalweights} can be used to compute the weigh of each transaction.


\begin{theorem}
	\label{theo:totalweights}
	Let \trans be a transaction from a qDB. The cumulative utility sum of the entire set of patterns that appear in transaction \trans is expressed as:
	$\totalwu(\trans) = 2^{\card{\trans}-1}\times \sum_{\litteral \in \trans}\weight(\litteral,\trans).$
\end{theorem}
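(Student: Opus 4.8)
The plan is to express the total weight $\totalwu(\trans)$ as the sum over all lengths $\uLen$ of the aggregated utilities $\vutu_\trans(\uLen, \card{\trans})$, and then invoke Theorem~\ref{theo:vutu} to collapse each term into a binomial multiple of $\vutu_\trans(1, \card{\trans})$. Concretely, writing $n = \card{\trans}$, the cumulative utility sum over \emph{all} patterns in $\trans$ is obtained by summing the column-$n$ entries of the upper triangle utility over every possible pattern length, i.e. $\totalwu(\trans) = \sum_{\uLen=1}^{n} \vutu_\trans(\uLen, n)$. This is exactly the column-sum computation already illustrated in the worked example (the $165 + 495 + 495 + 165 = 1320$ calculation), so I would first justify this identity by noting that $\vutu_\trans(\uLen, n)$ is by definition the sum of utilities of all patterns of length $\uLen$ contained in $\trans^n = \trans$, and these length classes partition the full pattern language of $\trans$.

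**Next I would substitute Theorem~\ref{theo:vutu}.** Since $\uLen \leq n = \card{\trans}$ for every term in the sum, the theorem applies and gives $\vutu_\trans(\uLen, n) = \Cnk{n-1}{\uLen-1} \times \vutu_\trans(1, n)$. Factoring out the common term $\vutu_\trans(1, n)$, which does not depend on $\uLen$, yields
\[
\totalwu(\trans) = \vutu_\trans(1, n) \times \sum_{\uLen=1}^{n} \Cnk{n-1}{\uLen-1}.
\]
Re-indexing the sum with $j = \uLen - 1$ turns it into $\sum_{j=0}^{n-1} \Cnk{n-1}{j}$, which is the well-known binomial identity evaluating to $2^{n-1}$.

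**To finish**, I would recall from the base case of Definition~\ref{def:VUTU} that $\vutu_\trans(1, n) = \sum_{j=1}^{n} \weight(\trans[j], \trans) = \sum_{\litteral \in \trans} \weight(\litteral, \trans)$, which is precisely the utility of the whole transaction. Combining the two pieces gives $\totalwu(\trans) = 2^{n-1} \times \sum_{\litteral \in \trans} \weight(\litteral, \trans)$, as claimed.

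**The main obstacle** is not any single calculation — each step is routine once Theorem~\ref{theo:vutu} is in hand — but rather cleanly justifying the very first identity $\totalwu(\trans) = \sum_{\uLen=1}^{n} \vutu_\trans(\uLen, n)$. One must argue that summing the column-$n$ entries over all lengths genuinely enumerates every non-empty subset of $\trans$ exactly once, so that no pattern is double-counted and none is omitted; this follows from the semantic reading of $\vutu_\trans(\uLen, i)$ as the total utility of the length-$\uLen$ patterns in $\trans^i$, but it is worth stating explicitly since the rest of the argument is essentially the binomial identity $\sum_{j=0}^{n-1}\Cnk{n-1}{j} = 2^{n-1}$ applied to Theorem~\ref{theo:vutu}.
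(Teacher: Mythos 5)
Your proposal is correct and follows essentially the same route as the paper's own proof: decompose $\totalwu(\trans)$ as $\sum_{\uLen=1}^{\card{\trans}} \vutu_\trans(\uLen, \card{\trans})$, apply Theorem~\ref{theo:vutu} to factor out $\vutu_\trans(1,\card{\trans})$, and conclude with the binomial identity $\sum_{j=0}^{\card{\trans}-1}\Cnk{\card{\trans}-1}{j} = 2^{\card{\trans}-1}$ (the paper merely phrases this as the $\minLen=1$, $\maxLen=\infty$ specialization of the constrained sum $\totalwu_{[\minLen..\maxLen]}(\trans)$). Your explicit justification that the length classes partition the pattern language of $\trans$, which the paper takes for granted, is a welcome touch of rigor but not a different argument.
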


\begin{proof}(Theorem \ref{theo:totalweights})
    Let \minLen and \maxLen be two integers such that $0<\minLen \leq \maxLen$. We know that the aggregate utility of all patterns under interval constraint $[\minLen..\maxLen]$ is $\totalwu_{[\minLen..\maxLen]}(\trans) = \sum_{\uLen=\minLen}^{\maxLen} \vutu_\trans(\uLen, \card{\trans})$. Based on Theorem \ref{theo:vutu}, this formula can be rewritten as follows: $\totalwu_{[\minLen..\maxLen]}(\trans) = \sum_{\uLen=\minLen}^{\maxLen} (\Cnk{\card{\trans}-1}{\uLen-1} \times \vutu_\trans(1, \card{\trans})) = (\sum_{\uLen=\minLen}^{\maxLen} \Cnk{\card{\trans}-1}{\uLen-1}) \times \vutu_\trans(1, \card{\trans})$. If $\maxLen=\infty$ and $\minLen=1$, then we got $\totalwu_{[1..\infty[}(\trans) = \totalwu(\trans) = (\Cnk{\card{\trans}-1}{1-1}+\Cnk{\card{\trans}-1}{2-1}+\cdots+\Cnk{\card{\trans}-1}{\card{\trans}-1})\times \vutu_\trans(1, \card{\trans})$. Since $\Cnk{\card{\trans}-1}{1-1}+\Cnk{\card{\trans}-1}{2-1}+\cdots+\Cnk{\card{\trans}-1}{\card{\trans}-1} =2^{\card{\trans}-1}$ and, by definition, $\vutu_\trans(1, \card{\trans})=\sum_{\litteral \in \trans}\weight(X,\trans)$, then $\totalwu(\trans) = 2^{\card{\trans}-1}\times \sum_{\litteral \in \trans}\weight(X,\trans).$ Hence the result.
\end{proof}



Based on these theorems, we no longer need to store the UTU in memory. This is because we can easily deduce any value $\vutu_\trans(\uLen, i)$ based on $\vutu_\trans(1, i) = \sum_{j=1}^{i}\weight(\trans[j], \trans)$.

After the preprocssing phase, we show in the following section how to sample efficiently a pattern from a transaction proportionally to its weighted utility.
\subsection{Drawing a high utility pattern}
\label{sub:sampling}

Sampling a pattern can be done efficiently from the upper triangle utility. Let us first recall the definition of the probability to select an item at a given position.

\begin{definition}
\label{def:probability_item_selection}
    Let \trans be a quantitative transaction, \pattern be the already drawn $k$ items ordered according to \order, and \uLen be the rest of the items to select, $\pattern=\litteral_{j_{k}}\cdots \litteral_{j_{1}}$. The probability to pick the item $\trans[i]$, with $i<j_{k}$, given \pattern and \uLen, is defined as the ratio of the aggregate weights of all patterns of length $\uLen+k$ in the transaction $\trans^i\cup \pattern$, containing the pattern ${\trans[i]} \cup \pattern$, equals to $\vutu_\trans(\uLen, i) - \vutu_\trans(\uLen, i-1) + \Cnk{i-1}{\uLen-1} \times \utility(\pattern, \trans)$, and the total aggregate weights of all patterns of length $\uLen+k$ appearing in the transaction $\trans^i\cup \pattern$, containing the pattern $\pattern$, equals to $\vutu_\trans(\uLen, i) + \Cnk{i}{\uLen} \times \utility(\pattern, \trans)$. Formally, we have:
    $$\prob(\trans[i]|\pattern \wedge \uLen) = \frac{\vutu_\trans(\uLen, i) - \vutu_\trans(\uLen, i-1) + \Cnk{i-1}{\uLen-1}\times \utility(\pattern, \trans)}{\vutu_\trans(\uLen, i) + \Cnk{i}{\uLen}\times \utility(\pattern, \trans)}.$$
\end{definition}

\noindent This means that a pattern \pattern is drawn proportionally to its utility in a given transaction \trans.

\begin{sloppypar}
\begin{example}
The pattern $\litteral_1\litteral_3$ can be drawn in $\trans_1$ using its upper triangle utility $\vutu_{\trans_1}$ as follows:
$\prob(\litteral_1\litteral_3, \trans_1) = \prob(\uLen=2|\trans_1) \times (1-\prob(\litteral_4|\pattern=\emptyset \wedge \uLen=2)) \times \prob(\litteral_3|\pattern=\emptyset \wedge \uLen=2) \times (1-\prob(\litteral_2|\pattern=\litteral_3 \wedge \uLen=1)) \times \prob(\litteral_1|\pattern=\litteral_2 \wedge \uLen=1)$. Then we have: $\prob(\litteral_1\litteral_3, \trans_1) = \frac{495}{1320} \times (1-\frac{495-262}{495}) \times \frac{262-56}{262} \times (1-\frac{56-44+\Cnk{1}{0} \times \utility(\litteral_3, \trans_1)}{56+\Cnk{2}{1} \times \utility(\litteral_3, \trans_1)}) \times \frac{44+\Cnk{0}{0} \times \utility(\litteral_3, \trans_1)}{44+\Cnk{1}{1} \times \utility(\litteral_3, \trans_1)}$. By simplifying, we get $\prob(\litteral_1\litteral_3, \trans_1) = \frac{44+75}{1320} = \frac{\utility(\litteral_1\litteral_3, \trans_1)}{\sum_{\pattern \subseteq \trans_1} \utility(\pattern, \trans_1)}$.
\end{example}

\end{sloppypar}

However, we can see that this approach needs to generate a random variable at each visited item, and it is possible to visit all items of the transaction during the drawing of a pattern. For instance, if in $\trans_5$ the drawn pattern contains the first item $\trans_5[1]=A$, we need a more efficient approach that does not require generating more than $\card{\pattern}$ random variables. Therefore, we propose an approach based on dichotomous random search.

\begin{property}[Dichotomous random search]
\label{prop:dichotomous_random_search}
Let $\trans$ be a quantitative transaction, and $\trans[j]$ be the last picked item during the drawing process. Using a sequential random variable generation to draw the next item $\trans[i]$, with $i<j$, to add in the sampled pattern $\pattern$ is equivalent to jumping directly on it based on dichotomous search. In other words, if a random number $\alpha_i$ drawn from the interval $]0, \vutu_\trans(\uLen,i)+\Cnk{i}{\uLen}\times \utility(\pattern, \trans)]$ allows us to pick $\trans[i]$, i.e., $\frac{\alpha_i}{\vutu_\trans(\uLen, i) + \Cnk{i}{\uLen}\times \utility(\pattern, \trans)} \leq \prob(\trans[i]|\pattern \wedge \uLen)$, then from position $j$ of $\trans$, we can directly select the item $\trans[i]$ such that $\vutu_\trans(\uLen, i-1)+\Cnk{i-1}{\uLen}\times\utility(\pattern,\trans) < \alpha_i \leq \vutu_\trans(\uLen, i)+\Cnk{i-1}{\uLen-1}\times\utility(\pattern,\trans)$.
\end{property}


\begin{proof}(Property \ref{prop:dichotomous_random_search})
This property is somewhat trivial because, after using Bayes' Theorem and Definition \ref{def:probability_item_selection}, it stems from simplifying the intermediate values which do not allow selecting their corresponding item based on their randomly generated variables.
\end{proof}

The usefulness of Property \ref{prop:dichotomous_random_search} is more valuable when the transactions are very long. The iterative approach used in \HAISampler drawing phase visits in the worst case all the items of the transaction. Such approach is often time consuming with long transactions. Thanks to the dichotomous random search property, we just need to generate the only necessary random variables while jumping directly to the corresponding positions of the transaction.

Now, we can formalize our algorithm \myalgo based on the previous statements for high utility pattern sampling.
\subsection{Overview of the algorithm}
\label{sec:algorithm}

\myalgo takes a qDB $\database$ with a total order relation \order, along with positive integers \minLen and \maxLen. The goal is to output a pattern $\pattern$ drawn proportionally to its utility, which is weighted by a length-based utility measure. It consists of two phases: a preprocessing phase and a drawing phase.

\noindent {\bf In the preprocessing phase:} Compute weights $W_{[\minLen..\maxLen]}(\trans)$ for each transaction $\trans$ in $\database$ based on specific formulas involving utility and length functions.

\noindent {\bf In the drawing phase:} First, it draws a transaction $\trans$ from $\database$ proportionally to its weight, and an integer $\uLen$ from $\minLen$ to $\maxLen$ with a probability proportional to the sum of pattern utilities of length $\uLen$ in $\trans$. Therefore, it iteratively draws items from the transaction $\trans$ based on a random process involving utility and length calculations until a pattern $\pattern$ of length $\uLen$ is formed. Finally, it returns $\pattern$ proportionally to its weighted utility.

\begin{algorithm}
    {\small
        \caption{\myalgo: Quantitative Pattern Sampling}
        \label{alg:Sampling}
        \begin{algorithmic}[1]
            \Statex {\bf Input:} A qDB $\database$ with an arbitrary total order relation \order, and two positive integers \minLen and \maxLen such that $\minLen \leq \maxLen$
            \Statex {\bf Output:} A pattern 
            $\pattern \sim \utility(\lang_{[\minLen..\maxLen]}(\database))$
            
            \Statex //Preprocessing
            \State $W_{[\minLen..\maxLen]}(\trans)  = \sum_{\uLen = \minLen}^{\maxLen} (\vutu_\trans(\uLen, \card{\trans}) \times \utilityLength{[\minLen..\maxLen]}{\uLen})$ for $\trans \in \database$
            \Statex //Drawing
            \State Draw a transaction $\trans$ : $\trans \sim W_{[\minLen..\maxLen]}(\database) = \sum_{\trans \in \database}W_{[\minLen..\maxLen]}(\trans) $
            \State Draw an integer $\uLen$ from $\minLen$ to $\maxLen$ with $\prob(\uLen|\trans)=\frac{\vutu_\trans(\uLen, \card{\trans})\times \utilityLength{[\minLen..\maxLen]}{\uLen}}{W_{[\minLen..\maxLen]}(\trans)}$
            \State $\pattern \gets \emptyset$, $j \gets \card{\trans}$
            \While {$\uLen>0$} \Comment{Process of drawing a pattern of length \uLen}
            \State $\alpha \gets \text{random}(0,1)\times (\vutu_\trans(\uLen,j) + \Cnk{j-1}{\uLen-1}\times \utility(\pattern,\trans))$
            \State $i \gets \underset{i}{\arg}(b_{inf}(i) < \alpha \leq b_{sup}(i))$ 
            \Statex $~~~~ \text{with} ~~b_{inf}(i) = \vutu_\trans(\uLen,i-1) + \Cnk{i-1}{\uLen}\times \utility(\pattern,\trans)$
            \Statex $~~~~\text{and} ~~b_{sup}(i) =\vutu_\trans(\uLen,i) + \Cnk{i-1}{\uLen-1}\times \utility(\pattern,\trans)$
            \State $\pattern \gets \{\trans[i]\} \cup \pattern$ and $\uLen \gets \uLen -1$
            \State $j \gets i$ \Comment{Jumping to the position indexed by i}
            \EndWhile
            \State \Return{\pattern} \Comment{Drawn proportionally to its weighted utility}
        \end{algorithmic}
    }
\end{algorithm}

\subsection{Theoretical analysis of the method}
\label{sub:analysis}

In this section, we first demonstrate the soundness of \myalgo before evaluating its time complexity (preprocessing and the sampling). Finally, we give its storage complexity.

\paragraph{Soundness of \myalgo} Property \ref{prop:soundness} states that our algorithm \myalgo performs an exact draw of a pattern.

\begin{property}[Soundness]
    \label{prop:soundness}
    Let \database be a qDB with an arbitrary total order relation \order, and \minLen and \maxLen be two positive integers such that $\minLen \leq \maxLen$. Algorithm \myalgo directly draws a pattern from \database with a length in the range $[\minLen..\maxLen]$ according to a distribution that is proportional to its weighted utility.
\end{property}

\begin{sloppypar}
\begin{proof}(Property \ref{prop:soundness})
    Given two positive integers \minLen and \maxLen such that $\minLen \leq \maxLen$, let $Z$ be the normalization constant defined by $Z = \sum_{\pattern \in \lang_{[\minLen..\maxLen]}(\database)} \left( \utility(\pattern, \database) \times \utilityLength{[\minLen..\maxLen]}{\uLen} \right)$. Let $\prob_{[\minLen..\maxLen]}(\pattern, \database)=\sum_{\trans \in \database}\prob(\pattern, \trans) = \sum_{\trans \in \database} \prob(\trans) \prob_{[\minLen..\maxLen]}(\pattern | \trans)$ be the probability to draw pattern \pattern from the set of patterns in \database using algorithm \myalgo under the length constraint. Based on line 2, we get $\prob(\trans)=\frac{W_{[\minLen..\maxLen]}(\trans)}{Z}$. From lines 3 and 4, we know that $\prob_{[\minLen..\maxLen]}(\pattern | \trans) = \prob(\uLen | \trans) \times \prob(\pattern | \trans \wedge \uLen)$. Line 3 also shows that $\prob(\uLen | \trans) = \frac{\vutu_\trans(\uLen, \card{\trans})\times \utilityLength{[\minLen..\maxLen]}{\uLen}}{\sum_{\uLen' = \minLen}^{\maxLen} (\vutu_\trans(\uLen', \card{\trans}) \times f_{[\minLen..\maxLen]}(\uLen'))} = \frac{\vutu_\trans(\uLen, \card{\trans})\times \utilityLength{[\minLen..\maxLen]}{\uLen}}{W_{[\minLen..\maxLen]}(\trans)}$.
    We also know that $\prob(\pattern | \trans \wedge \uLen)=\frac{\utilityTrans(\pattern, \trans)}{\sum_{\pattern' \in \trans \wedge \card{\pattern'} \in [\minLen..\maxLen]} \utilityTrans(\pattern', \trans)}=\frac{\utilityTrans(\pattern, \trans) \times \utilityLength{[\minLen..\maxLen]}{\uLen}}{\vutu_\trans(\uLen, \card{\trans}) \times \utilityLength{[\minLen..\maxLen]}{\uLen}}$ from lines 7 and 8. By substitution, we have 
    $\prob_{[\minLen..\maxLen]}(\pattern, \database)=\sum_{\trans \in \database} (\frac{W_{[\minLen..\maxLen]}(\trans)}{Z} \times \frac{\vutu_\trans(\uLen, \card{\trans})\times \utilityLength{[\minLen..\maxLen]}{\uLen}}{W_{[\minLen..\maxLen]}(\trans)} \times \frac{\utilityTrans(\pattern, \trans) \times \utilityLength{[\minLen..\maxLen]}{\uLen}}{\vutu_\trans(\uLen, \card{\trans}) \times \utilityLength{[\minLen..\maxLen]}{\uLen}})= \sum_{\trans \in \database} \frac{\utilityTrans(\pattern, \trans) \times \utilityLength{[\minLen..\maxLen]}{\uLen}}{Z} = \frac{\utilityLength{[\minLen..\maxLen]}{\uLen}}{Z} \times (\sum_{\trans \in \database} \utilityTrans(\pattern, \trans))$. Thanks to Definition \ref{def:patternutility}, $\utility(\pattern, \database) = \sum_{\trans \in \database} \utilityTrans(\pattern, \trans)$. Then, we can conclude that $\prob_{[\minLen..\maxLen]}(\pattern, \database) = \frac{\utilityLength{[\minLen..\maxLen]}{\uLen}}{Z} \times \utility(\pattern, \database)$, which shows that \pattern is drawn proportionally to its weighted utility and completes this proof.
\end{proof}
\end{sloppypar}

We will now analyze the complexity of \myalgo. 

\paragraph{Preprocessing complexity} Suppose that the maximal length $\maxLen$ and the sum of the item utilities are known. If $\maxLen=\infty$, each transaction is weighted in $\complexity(1)$. In that case, \database is preprocessed in $\complexity(\card{\database})$. Otherwise, \myalgo weights each transaction of \database in $\complexity(\maxLen)$. Hence, the complexity of preprocessing \database is $\complexity(\card{\database} \times \maxLen)$. In comparison, \HAISampler weights \database in $\complexity(2 \times \card{\database}\times\card{\items}\times (\maxLen-\minLen))$, where \items is the set of all items in \database. This is because \myalgo computes the weight of a transaction directly based on Theorem \ref{theo:vutu}, while \HAISampler needs to compute the local matrix of each item.

\paragraph{Drawing complexity} Algorithm \myalgo draws a pattern of length \uLen by making jumps, where each jump corresponds to selecting an item. Let \trans be the transaction from which the pattern is drawn. The item of the first jump is found in $\complexity(\log(\card{\trans}-(\uLen-1)))$, the second in $\complexity(\log(\card{\trans}-1-(\uLen-2)))$, and so on, until the last in $\complexity(\log(\card{\trans}-(\uLen -1)-(\uLen-\uLen)))$. Therefore, the overall complexity is $\complexity(\sum_{i=0}^{\uLen -1}\log(\card{\trans}-i-(\uLen-1-i)) = \complexity(\sum_{i=0}^{\uLen -1}\log(\card{\trans}-\uLen+1)) = \complexity(\uLen \times \log(\card{\trans}-\uLen+1))$. Considering the complexity to draw the transaction \trans in $\complexity(\log(\card{\database}))$, the complexity to draw a pattern directly from the database \database is $\complexity(\log(\card{\database}) + \maxLen \times \log(\card{\items}-\maxLen+1))$. Then, the complexity to draw $K$ patterns with \myalgo is $\complexity(K \times (\log(\card{\database}) + \maxLen \times \log(\card{\items}-\maxLen+1)))$, which is lower than that of \HAISampler, which is $\complexity(K \times (\log(\card{\database}) + \card{\items}))$.

\paragraph{Storage Complexity} 
Let $\complexity(||\database||)$ represent the storage cost of the entire database in memory before processing (size in memory, Table \ref{tab:benchmark_dataset}). After the preprocessing phase, \myalgo requires $\complexity(\card{\database})$ to store the weight of each transaction of \database. Additionally, the combination values are stored in memory based on Pascal's triangle, utilizing the symmetric property $\Cnk{i}{j} = \Cnk{i}{i-j}$. In this case, the storage cost for the combination values is $\complexity(G(\card{\trans}_{\max}))$ thanks to Property \ref{prop:pascal}, with $\card{\trans}_{\max}$ the length of the longest transaction of \database. Hence, the total complexity of the storage cost of \myalgo is in $\complexity(||\database|| + \card{\database} + G(\card{\trans}_{\max}))$. But, $\card{\database} + G(\card{\trans}_{\max}) <\!\!<\!\!< ||\database||$.

\begin{property}[Storage Cost of Pascal's Triangle]
	\label{prop:pascal}
	Thanks to the symmetric property $\Cnk{i}{j} = \Cnk{i}{i-j}$, the minimal storage cost of a Pascal's triangle with $n+1$ rows is in $\complexity(G(n))$ with:
	\begin{equation}
	G(n)  = \left\{ \begin{array}{cl}
	\left( \frac{n}{2}+1 \right)^2 & ~~~~ \text{if n is even} \\
	(n+1)\times\frac{n+3}{4} & ~~~~ \text{otherwise} 
	\end{array} \right.
	\end{equation}
\end{property}

\begin{proof}(Property \ref{prop:pascal})
    By exploiting the symmetric property $\Cnk{i}{j} = \Cnk{i}{i-j}$, we need only to store the ($\floor{\frac{i}{2}}+1$) first values of row $i$, with $i \in [0..n]$. Let $g(i)=\floor{\frac{i}{2}}+1$ for all $i \in [0..n]$. With $n+1$ rows, the total number of combination values that needed to be stored is compute by $G(n)=\sum_{i=0}^{n}g(i)$. First, let us break down the sum into its individual terms: $g(0)=\floor{\frac{0}{2}}+1=0+1=1$; $g(1)=\floor{\frac{1}{2}}+1=0+1=1$; $g(2)=\floor{\frac{2}{2}}+1=1+1=2$; $g(3)=\floor{\frac{3}{2}}+1=1+1=2$; $g(4)=\floor{\frac{4}{2}}+1=2+1=3$; so on ... Notice that, the term $\floor{\frac{i}{2}}+1$ takes on the values 1, 1, 2, 2, 3, 3, ... as $i$ increases. This is because $g(2k+1)=\floor{\frac{2k+1}{2}}+1=\frac{2k}{2}+\floor{\frac{1}{2}} +1=\frac{2k}{2}+1=\floor{\frac{2k}{2}}+1=g(2k)$, for all positive integer $k$. 
    
    \noindent Case 1: $n$ is even, i.e, $n = 2\times k$ with $k$ a positive integer. In that case, $\floor{\frac{n}{2}}=\floor{\frac{2\times k}{2}}=\frac{n}{2}$. Then, we have $G(n)= 2 \times (\sum_{j=1}^{\frac{n}{2}}j) + g(n) = 2\times \frac{\frac{n}{2}(\frac{n}{2}+1)}{2} + \frac{n}{2}+1=\frac{n}{2}\times(\frac{n}{2}+1)+ \frac{n}{2}+1=(\frac{n}{2}+1)\times(\frac{n}{2}+1)$. Hence, we got $G(n)=(\frac{n}{2}+1)^2$.
    
    \noindent Case 2: $n$ is odd, i.e, $n-1$ is even, and $g(n)=\floor{\frac{n}{2}}+1=\frac{n-1}{2}+1$. Therefore, we have $G(n)= G(n-1)+g(n)=(\frac{n-1}{2}+1)^2+\frac{n-1}{2}+1=(\frac{n+1}{2})^2+\frac{n+1}{2}=\frac{n+1}{2}(\frac{n+1}{2}+1)=\frac{n+1}{2}\times\frac{n+3}{2}$. Hence, we got $G(n)=(n+1)\times\frac{n+3}{4}$.
\end{proof}

\begin{table*}
    \centering
	{\tiny
	\centering
	\caption{Statistics of none-semantic qDB benchmark: \retail, \bmsDeux, \kosarak, \chainstoreFIM and \USCensus from SPMF real databases and IBMGenerator synthetic data \tfiftyItenDtwentyM(utility values are randomly drawn between $10$ and $1,000$). We use the``pympler'' Python package to evaluate the size of these databases before any preprocessing. However, for \tfiftyItenDtwentyM, due to its large volume, we can not obtain its exact size in memory.}
 
	\label{tab:benchmark_dataset}	
	\begin{tabular}{l|rrrrr|rrr}
		Database & \multicolumn{1}{c}{$\card{\database}$} & \multicolumn{1}{c}{$\card{\items}$}  & \multicolumn{1}{c}{$\card{\trans}_{\min}$} & \multicolumn{1}{c}{$\card{\trans}_{\max}$} & \multicolumn{1}{c}{$\card{\trans}_{avg}$} & \multicolumn{1}{|c}{\database size in memory} & \database size on disk & $C^k_n$ in memory (Prop. \ref{prop:pascal}, \maxLen=10)\\
		\hline
		\retail & 88,162 & 16,470 & 1 & 76 & 10.30 & 242.80 MB & 7.36 MB & 30.93 MB\\
		\bmsDeux &  77,512 & 3,340 & 1 & 161 & 1.62 & 99.07 MB & 3.55 MB & 69.82 MB \\
		\kosarak  & 990,002 & 41,270 & 1 & 2,497 & 8.09 & 2.13 GB & 61.39 MB & 1.17 GB \\
		\chainstoreFIM  & 1,112,949 & 46,086 & 1 & 170 & 7.23 & 2.12 GB & 74.48 MB & 73.87 MB \\
		\USCensus  & 1,000,000 & 316 & 48 & 48 & 48.00 & $\approx 10$ GB & 352.60 MB & 18.05 MB\\
		\tfiftyItenDtwentyM  & 20,000,000 & 16,957,575 & 14 & 94 & 52.77 & Out of memory & 7.83 GB & 39.12 MB \\
	\end{tabular}
	}
\end{table*}

\begin{figure*}
	\centering
	
	\includegraphics[width= .9\textwidth]{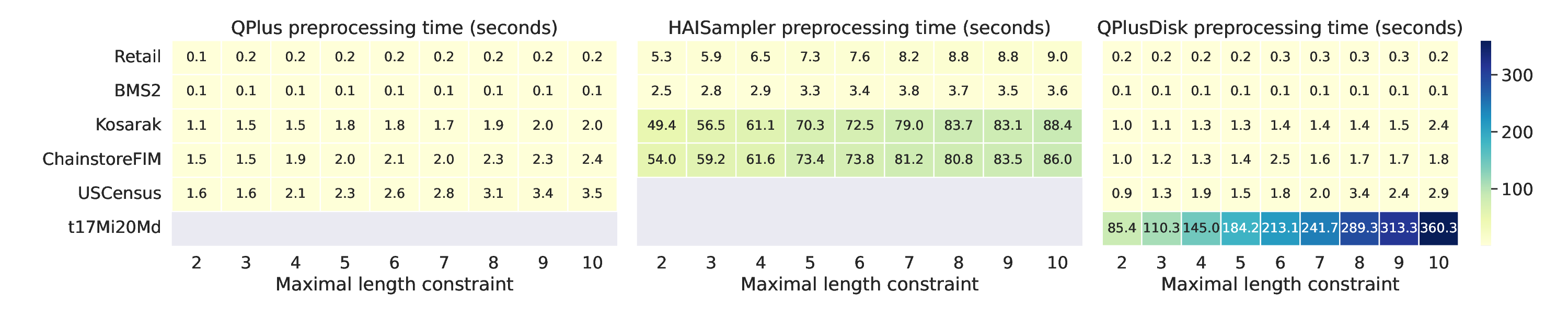}
	\includegraphics[width= .9\textwidth]{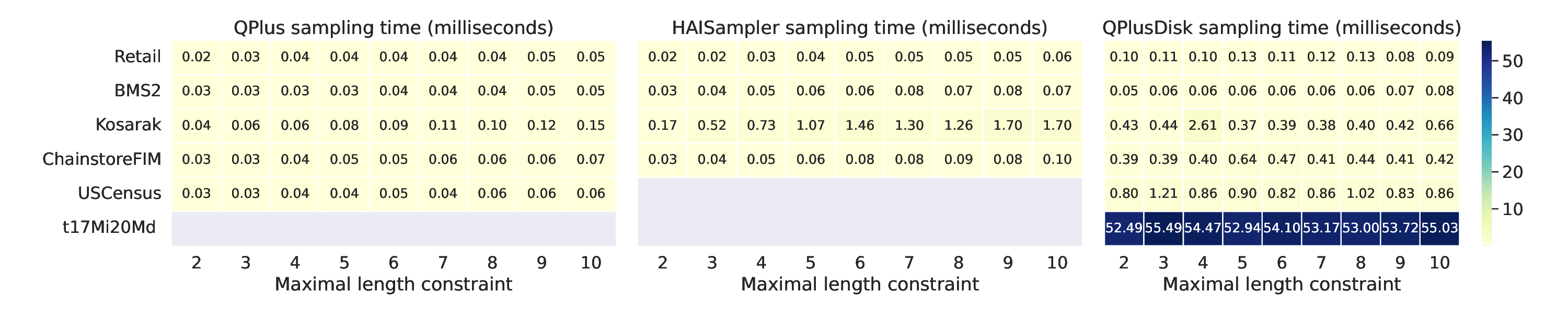}
		
	\caption{Comparing the evolution of execution time based on maximum length constraint (in gray = out of memory)}
\label{fig:preprocessing_under_length_constraint}
\end{figure*}

We recall that our algorithm \myalgo depends on the available memory to perform the dedicated task, which is a problem when we need to explore larger qDBs. To overcome this issue, we propose an on disk approach, denoted by \myalgoOff.
\subsection{\myalgoOff: An on-disk approach of \myalgo}
\label{sub:qplusoff}

The strength of this second method, \myalgoOff, stems from the fact that it avoids storing the entire database in main memory. Only the weights are stored into a list in order to have an exact draw. Algorithm \ref{alg:myalgoOff} formalizes the on-disk version.





\begin{algorithm}[!h]
	{\small
		\caption{\myalgoOff: On-disk of \myalgo 
        }
		\label{alg:myalgoOff}
		\begin{algorithmic}[1]
			\Statex {\bf Input:} A quantitative database $\database$ with an arbitrary total order relation \order, two positive integers \minLen and \maxLen such that $\minLen \leq \maxLen$, and $K$ the number of pattern to draw 
			\Statex {\bf Output:} A list of patterns $[\pattern_1,...,\pattern_K]$, $\pattern_i \sim \utility(\lang_{[\minLen..\maxLen]}(\database))$ 
			
            \Statex //Preprocessing
			\State Let $lisTid$ be the list of transaction identifiers in \database
			\State 
			$W_{[\minLen..\maxLen]}(j)  = \sum_{\uLen = \minLen}^{\maxLen} (\vutu_{\trans_j}(\uLen, \card{{\trans_j}}) \times \utilityLength{[\minLen..\maxLen]}{\uLen})$ for $j \in lisTid$
			\Statex //Drawing: Sampling K patterns independently
			\State Let $selectTid \gets [j_1, \cdots, j_K]$ be a list of $K$ identifiers where 
			$j_k \sim W_{[\minLen..\maxLen]}(\database) = \sum_{j \in lisTid}W_{[\minLen..\maxLen]}(\trans_j) $
			\State Let $Sample$ be an empty list.
			\While{$\card{Sample} < K$ and $\trans_j \in \database$}
			\If{$j \in selectTid$}
			\For{k=1 to count(j, selectTid)} \Comment{An identifier can be drawn multiple times}
			\State Draw a pattern \pattern from $\lang_{[\minLen .. \maxLen]}(\{\trans_j\})$ 
			\State Add \pattern into $Sample$
			\EndFor
			\EndIf
			\EndWhile
			\State \Return{$Sample$} \Comment{A list of $K$ patterns} 
		\end{algorithmic}
	}
\end{algorithm}

\section{Experimental Study}
\label{sec:Experiments}

Table \ref{tab:benchmark_dataset} presents the characteristics of the qDBs used in our study from SPMF\footnote{\url{https://www.philippe-fournier-viger.com/spmf/index.php}} and IBMGenerator\footnote{\url{https://github.com/zakimjz/IBMGenerator}}. Our framework  (Github\footnote{\url{https://github.com/ScalableSamplingInLargeDatabases/QPlus}} and Colab\footnote{\url{https://colab.research.google.com/drive/1yFRH_cIsmfD1OfQ71ruwDELOuQUhyfmZ?usp=sharing}}) is implemented using Python 3.0 in Google Colab with a standard account having 12 GB RAM and Intel(R) Xeon(R) CPU @ 2.20GHz. During all experiments, we set the minimal length constraint to $\minLen=1$, while the maximal length constraint is chosen within the range of $[2..10]$ when applicable. Additionally, we evaluate our approach without any constraint for execution time comparison with \HAISampler\cite{DiopPAKDD2022}, considering its maximal length constraint set to $\infty$. For utility evaluation, we use the average utility function $f_{[\minLen..\maxLen]}(\pattern)=\frac{1}{\card{\pattern}}$ if $\card{\pattern} \in [\minLen..\maxLen]$, and $0$ otherwise. Execution times are repeated 10 times and the standard deviation omitted because too small. Here are the questions that we address:

\begin{enumerate}
	\item Does Theorem \ref{theo:vutu} speed up the preprocessing under length constraint in large qDBs?
	\item Does Theorem \ref{theo:totalweights} allow us to deal with large qDBs without length constraints when the state-of-the-art \HAISampler approach fails?
	\item Is the jumping based on dichotomous random search (Property \ref{prop:dichotomous_random_search}) very interesting in the drawing step?
	\item Are the drawn patterns representative for sub-profiles discovery in knowledge graph profiles?
\end{enumerate}

%
%

\subsection{Analysis of our approaches under length constraint}
\label{sec:ExperimentsAnalysis}

In Figure \ref{fig:preprocessing_under_length_constraint}, we display the execution speed of our algorithms within a length constraint $\maxLen \in [2..10]$. The first line shows the preprocessing time evolution for both \myalgo and \HAISampler. Generally, \myalgo is significantly faster, ranging from at least 15 times to up to 45 times faster than \HAISampler. Despite \myalgoOff being less efficient due to on-disk preprocessing, it remains faster than \HAISampler. Notably, \myalgoOff is the only approach capable of handling the \tfiftyItenDtwentyM, with 2 million transactions stored in a 7.83 GB disk file, in 5 minutes under maximal length constraint of 10 based on Theorem \ref{theo:vutu}.
The second line illustrates the drawing time per pattern under length constraints. \myalgo excels, requiring only 0.14 milliseconds to draw a length-10 pattern in the \kosarak database, whereas \HAISampler takes 1.75 milliseconds for a pattern in the same database. Particularly for databases with long transactions like \kosarak, \myalgo is at least 4 times and at most 16 times faster than \HAISampler. In the third column of the second line, we observe the drawn time per pattern for \myalgoOff. While not as fast as the first approaches, it remains efficient, especially with the larger \tfiftyItenDtwentyM, where it spends around 50 milliseconds to draw a pattern. 
During the sampling phase, \HAISampler exhibits superior performance with small datasets as it does not require computation of any utility sum due to its pre-stored weighting matrix but return Out of memory for larger databases like \USCensus and \tfiftyItenDtwentyM. Conversely, \myalgo and \myalgoOff do not store the UTU. Instead, they employ Theorem \ref{theo:vutu} to determine the acceptance probability of an item while \myalgo remains fast with small ans large databases. These results affirm the effectiveness of our drawing approach, outperforming \HAISampler, thanks to \myalgo jumping based on dichotomous random search. In general, the results in Figure \ref{fig:preprocessing_under_length_constraint} address questions 1 and 3.

\subsection{Analysis of our approaches without length constraint}
\label{sec:ExperimentsSpeedAnalysis}
\begin{table*}
	\centering
 {\small
	\caption{Comparing the preprocessing time (in seconds) with $\maxLen=\infty$}
	\label{tab:preprocessing_without_length_constraint}
	\begin{tabular}{l|rrrrrr}
		Algorithms   & \retail				&		\bmsDeux 			&		\kosarak			&	\chainstoreFIM &  \USCensus& \tfiftyItenDtwentyM \\
		\hline
		\myalgo		&	0.139 $\pm$	 0.032 	&	0.111 $\pm$ 0.042 	&	2.404 $\pm$ 0.604 	&	1.697 $\pm$ 0.364 	&	1.821 $\pm$ 0.624 	& 	$-$ \\
		\HAISampler	&	15.996 $\pm$ 1.133	&	5.509$\pm$ 0.607 	& $-$	& 	$-$ &	$-$	 & 	$-$  \\	
		\myalgoOff	&	0.182 $\pm$ 0.103	&	0.106 $\pm$ 0.130 	& 2.835 $\pm$ 0.149 	& 	1.138 $\pm$  0.132 &	1.562 $\pm$ 0.822	 & 	27.359 $\pm$ 1.268  \\	
	\end{tabular}
 }
\end{table*}

\begin{table*}
	\centering
 {\small
	\caption{Comparing the drawing time per pattern (in milliseconds) with $\maxLen=\infty$}
	\label{tab:drawing_without_length_constraint}
	\begin{tabular}{l|rrrrrr}
		Algorithms   & \retail				&		\bmsDeux 			&		\kosarak			&	\chainstoreFIM	 &  \USCensus & \tfiftyItenDtwentyM  \\
		\hline
		\myalgo		&   0.297 $\pm$	 0.012 	&	1.078 $\pm$ 0.024 	&	120.383 $\pm$ 0.581 	&	1.051 $\pm$ 0.163 	&	0.164 $\pm$ 0.007 		 & 	$-$  \\
		\HAISampler	&	0.190 $\pm$ 0.059	&	0.404 $\pm$ 0.127 	& $-$	& 	$-$ &	$-$ 	 & 	$-$  \\	
		\myalgoOff	&	0.332 $\pm$ 0.053	&	0.590 $\pm$ 0.032 	& 537.338 $\pm$ 0.952 	& 	0.950 $\pm$  0.017 &	0.951 $\pm$ 0.096	 & 	52.813 $\pm$ 0.560  \\
	\end{tabular}
 }
\end{table*}

Tables \ref{tab:preprocessing_without_length_constraint} and \ref{tab:drawing_without_length_constraint} present experimental results on execution times (preprocessing and drawing) for our approaches and a baseline without length constraint.

Table \ref{tab:preprocessing_without_length_constraint} shows that our preprocessing approaches remain very efficient without length constraint even with large databases while \HAISampler can not work with these large databases. With \retail and \bmsDeux, \myalgo spends $0.139$ seconds and $0.111$ second with a small standard deviation while \HAISampler needs $5.996$ seconds and $5.509$ respectively with a larger standard deviation. A particular interesting remark is that \myalgoOff is too very fast without length constraint. It is as good as \myalgo while being more faster than \HAISampler. These results consolidate the effectiveness of Theorem \ref{theo:totalweights} for processing large databases.
Table \ref{tab:drawing_without_length_constraint} shows the drawing times per pattern for the three approaches. \HAISampler is more faster in qDBs that contain not too long transactions like \retail and \bmsDeux. We can see that without constraint, all pattern are drawn from the larger transactions when dealing with databases that contain very long transactions like \kosarak (2,497 in one transaction). In this database, \myalgo spends at average $120.383$ milliseconds to draw a pattern, because all of them are too long. \myalgoOff needs in average $52.813$ milliseconds to draw  pattern without length constraint from the largest database \tfiftyItenDtwentyM. The results indicate that the jumping based on dichotomous random search is particularly beneficial for sampling patterns when the maximal constraint tends to $\infty$, especially in databases containing very long transactions. 
These experiments affirm the effectiveness of Theorem \ref{theo:vutu}, Theorem \ref{theo:totalweights}, and Property \ref{prop:dichotomous_random_search} in efficiently sampling patterns in large qDBs without length constraints, then supporting responses to questions 2 and 3.

%

\subsection{HUP-based sub-profiles in knowledge graph}
\label{sec:sub-profile}

Our use case is dedicated for diverse representative sub-profile discovery in knowledge graph profiling \cite{Principe-ABSTAT,DiopJOCCH23}. The diversity is achieved thanks to the random nature while the representativeness is guaranteed by the exactitude of our approach (Property \ref{prop:soundness}). 
The profiles \DOREMUS, \BENICULTURALI and \DBpedia in Table \ref{tab:dataset} present enormous difficulties for visualization with the \KGSUMVIZ\footnote{\url{https://github.com/DTTProfiler/DTTProfiler}} tool, even if they are not really so big. However, with our approach, we can allow the user to choose the maximum size of the sub-profiles (i.e, number of triples) to be extracted from the initial profile which is equal to $\maxLen\times N$ where \maxLen is the maximum length constraint and $N$ is the number of desired sampled patterns.

\begin{table}[htp!]
\centering
{\small
\caption{Semantic qDB of Knowledge graph profiles}	
	\label{tab:dataset}	
	\begin{tabular}{l|r|r|r}
		& \multicolumn{3}{c}{statistics for TT profile} \\
		\ABox & nb. of TT & nb. of links  & nb. of nodes \\
		\hline
		\DOREMUS  &  2,399	&	1,785	&	115 \\
		\BENICULTURALI  &  641&	7,518	&	440   \\
		\DBpedia  &  404	&	6,010	&	204	 \\
	\end{tabular}
 }
\end{table}

\paragraph{Qualitative evaluation of sampled pattern-based sub-profile} We begin by demonstrating the representativeness of patterns sampled by \myalgo across the entire qDB, comparing it with the \Bootstrap sampling approach. \Bootstrap randomly selects a transaction and returns a pattern of length between \minLen and \maxLen. The distribution of distinct sampled patterns is shown in Figure \ref{fig:comparison_quality_QPLUSvsBootstrap}, comparing \myalgo with \Bootstrap, considering a maximum length $\maxLen=5$. The patterns drawn by \myalgo prove more representative, with higher average utility and greater abundance in the sample. This affirms the effectiveness of \myalgo in constructing representative sub-profiles within large knowledge graph profiles.

\begin{figure}
	\centering
	\includegraphics[width=.16 \textwidth]{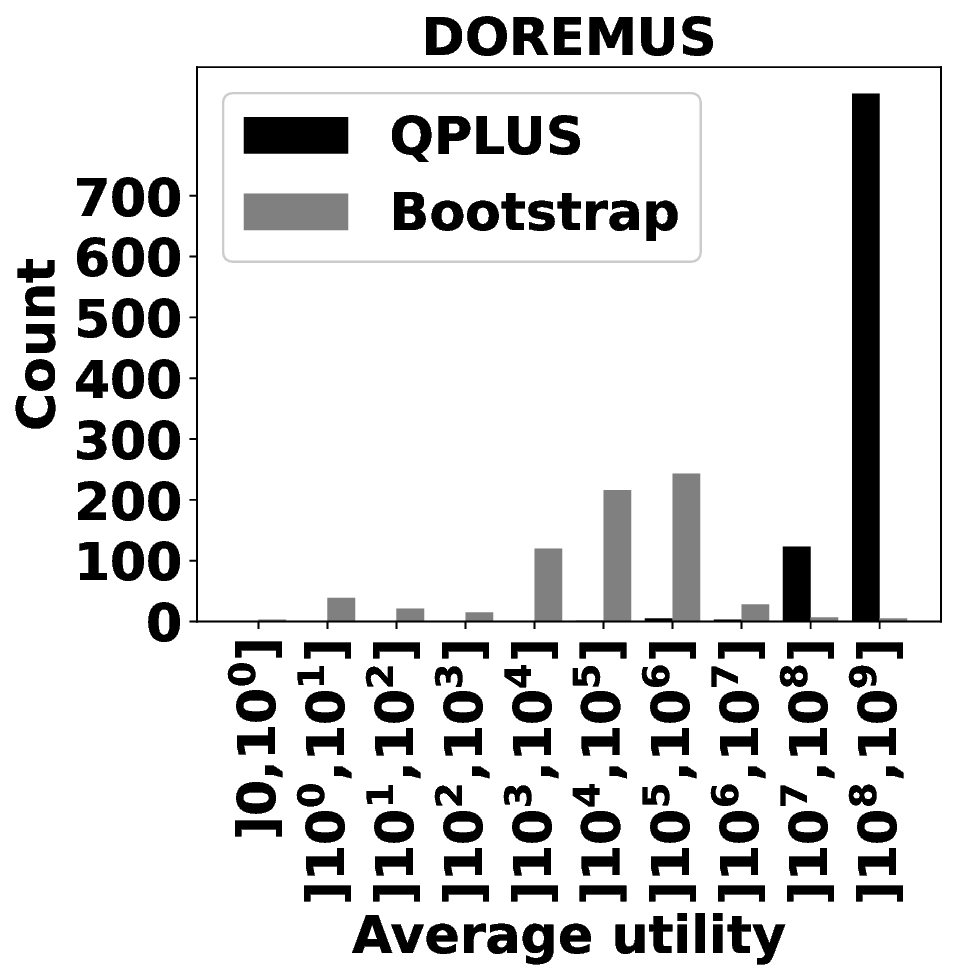}
	\includegraphics[width=.13 \textwidth]{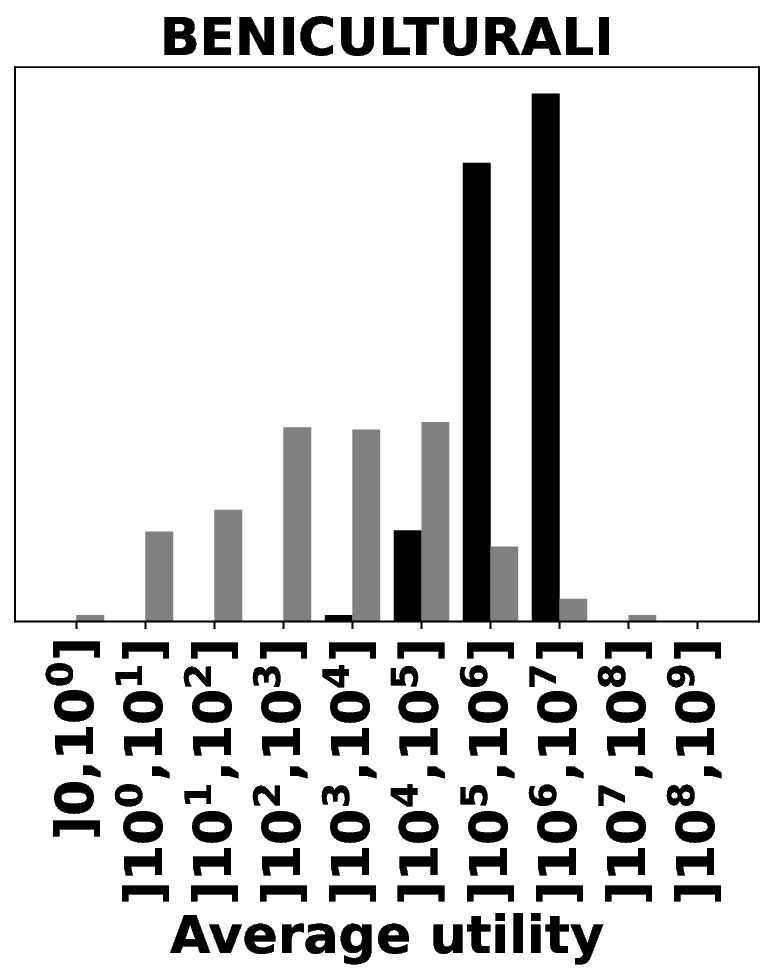}
	\includegraphics[width=.13 \textwidth]{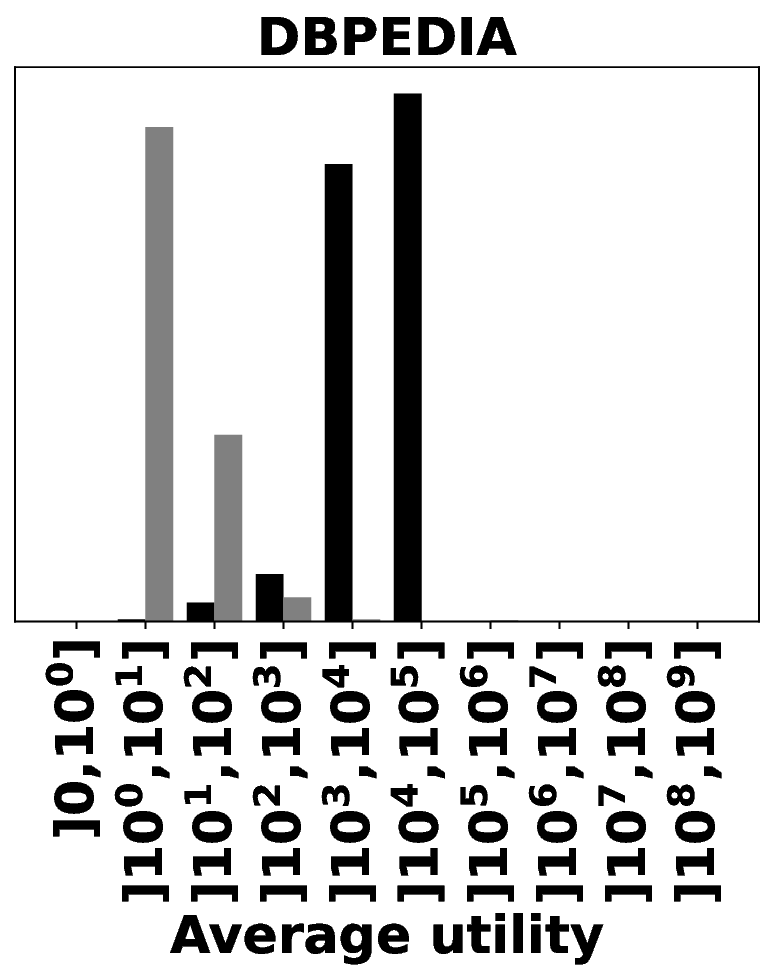}
	\caption{Representativeness of 1,000 drawn patterns by \myalgo and Bootstrap}
	\label{fig:comparison_quality_QPLUSvsBootstrap}
\end{figure}

To highlight the added value, we present a violin plot in Figure \ref{fig:plus-values}, depicting the distribution of means and their 95\% confidence intervals for both \myalgo and \Bootstrap. Each pattern is weighted by its average utility, normalized by the sum of pattern utilities in the union of samples from \myalgo and \Bootstrap. The plot reveals that \Bootstrap exhibits higher variance, indicating lower precision in estimating the distribution of pattern utilities. In contrast, \myalgo, with a centered white dot representing the median, offers a more representative view of the pattern space. These consistent outcomes underscore the added value of \myalgo over \Bootstrap across these knowledge graph profiles.

\begin{figure}
	\centering
	\includegraphics[width=.15 \textwidth]{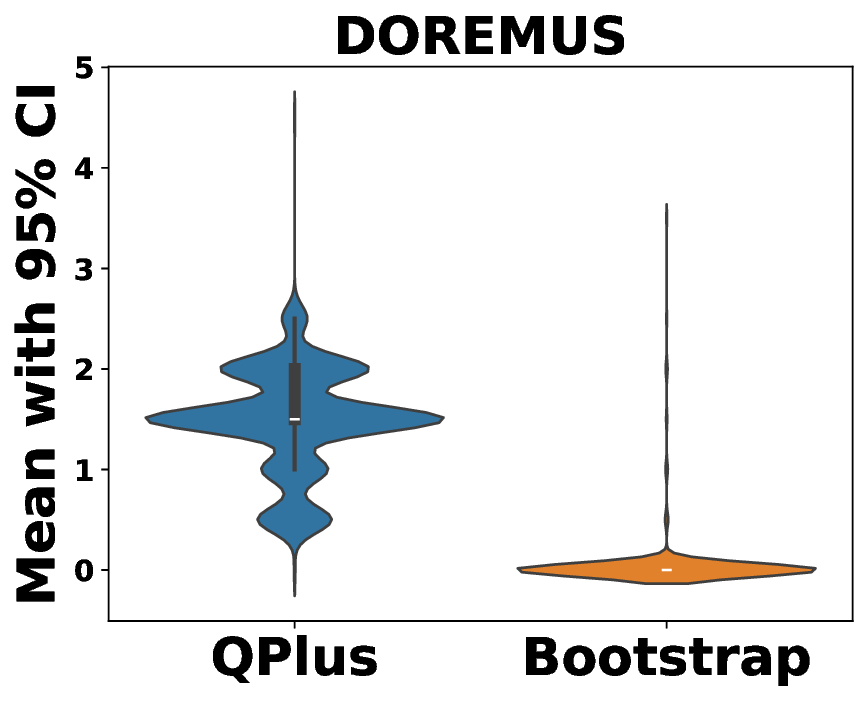}
	\includegraphics[width=.14 \textwidth]{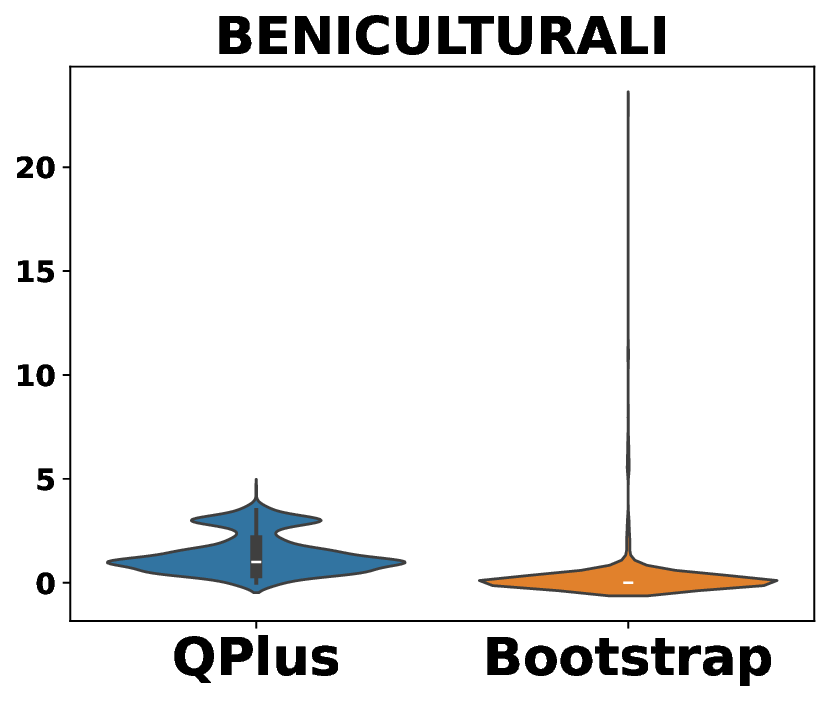}
	\includegraphics[width=.14 \textwidth]{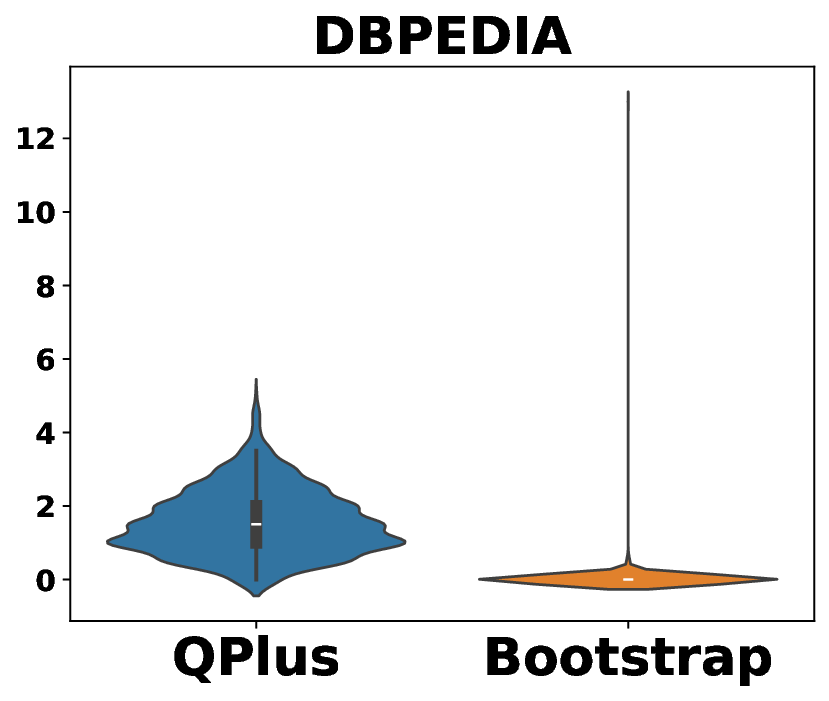}
	\caption{Violin plots with mean and 95\% confidence intervals for the plus-value of \myalgo on Bootstrap for 1,000 patterns}
	\label{fig:plus-values}
\end{figure}

Finally, Figure \ref{fig:viz-subprofile} displays sub-profiles derived from patterns drawn by \myalgo where nodes represent collections of concepts or terms. The stemming sub-profiles show the structure of the original knowledge graph profile. For example, \DOREMUS and \DBpedia profiles concentrate around single nodes (`prov\#Entity') and `Company' respectively, resulting in ego-network-like arrangements. On the other hand, \BENICULTURALI profile information spans multiple nodes, contributing to less interconnected nodes in the stemmed sub-profile graph. No node is isolated; each is represented as a triple containing a predicate linking a set of concepts as the subject to a set of concepts or terms as the object.

\begin{figure}
	\centering
    \begin{tabular}{c|c|c}
        \begin{minipage}{.15 \textwidth}
            \includegraphics[width= \linewidth]{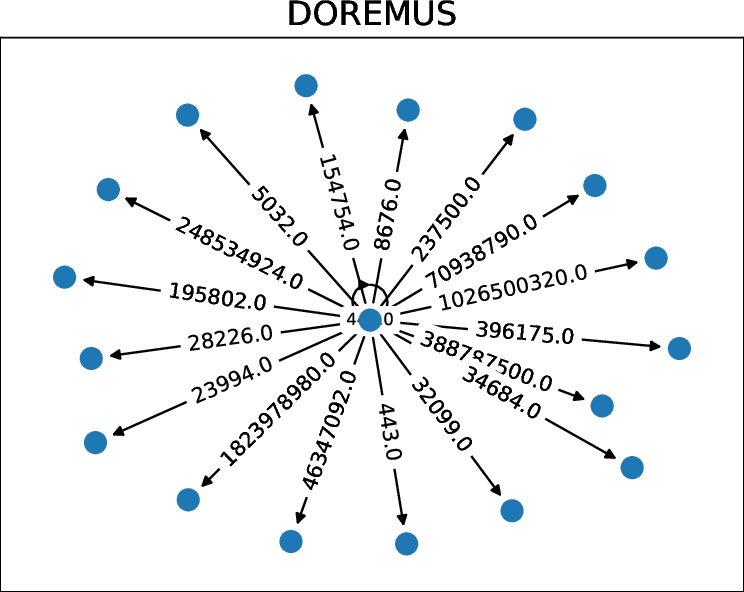}
        \end{minipage}
        &
        \begin{minipage}{.14 \textwidth}
            \includegraphics[width=\linewidth]{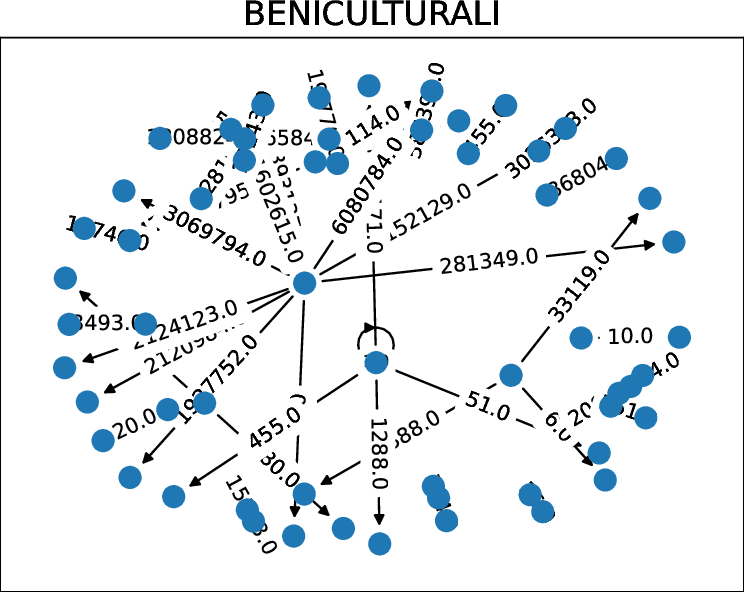}
            \label{fig:viz-subprofile1}
        \end{minipage}
        &
        \begin{minipage}{.14 \textwidth}
            \includegraphics[width=\linewidth]{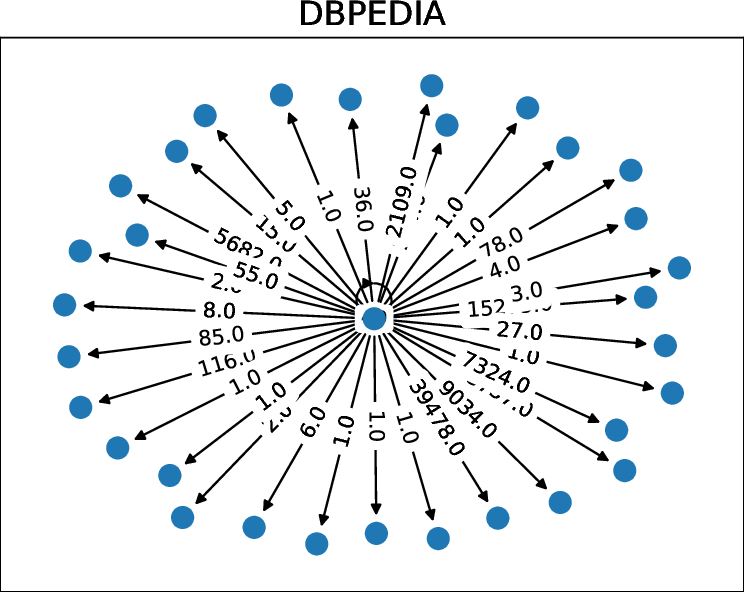}
            \label{fig:viz-subprofile1}
        \end{minipage}
        
        \\
        \begin{minipage}{.14 \textwidth}
            \includegraphics[width= \linewidth]{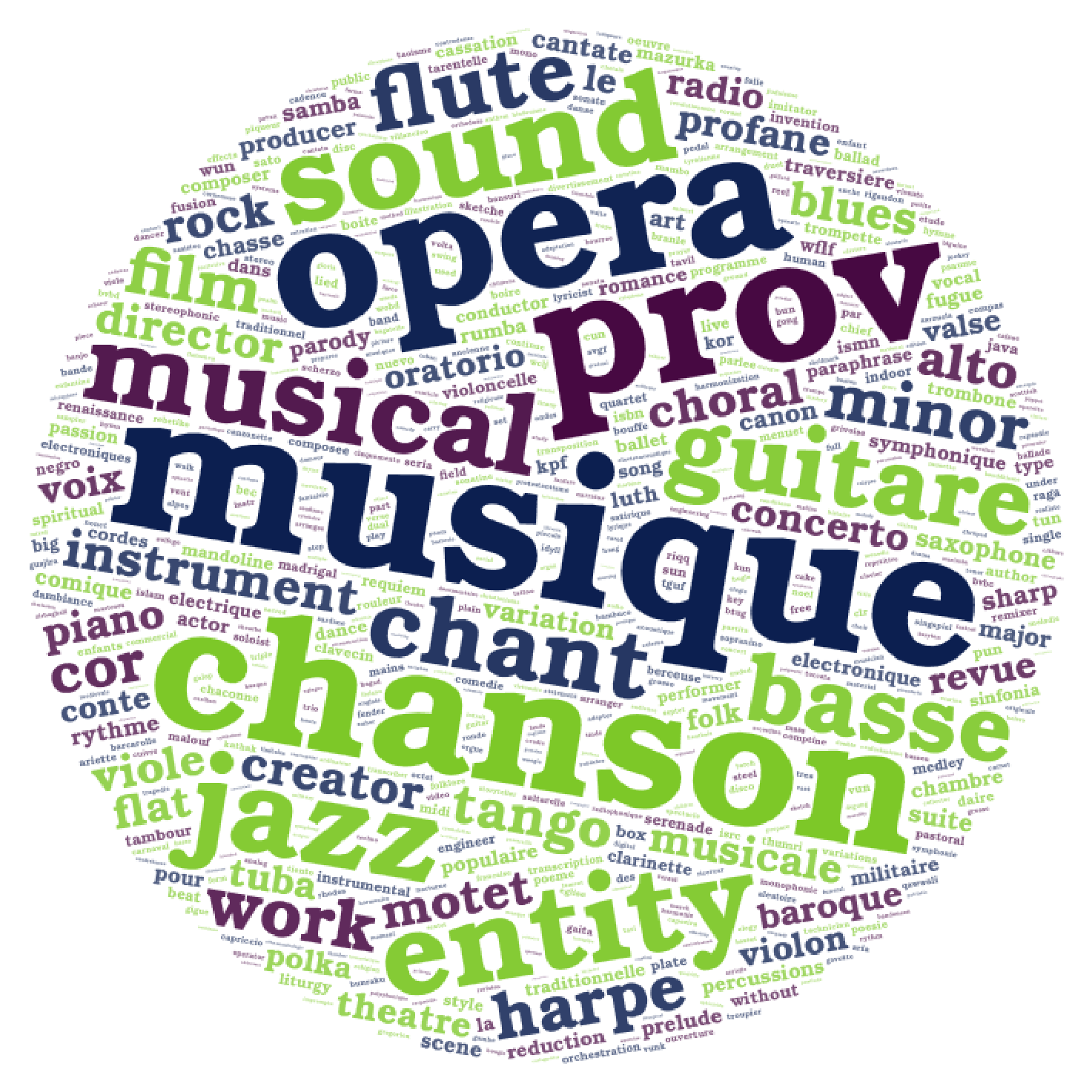}
        \end{minipage}
        &
        \begin{minipage}{.15 \textwidth}
            \includegraphics[width=\linewidth]{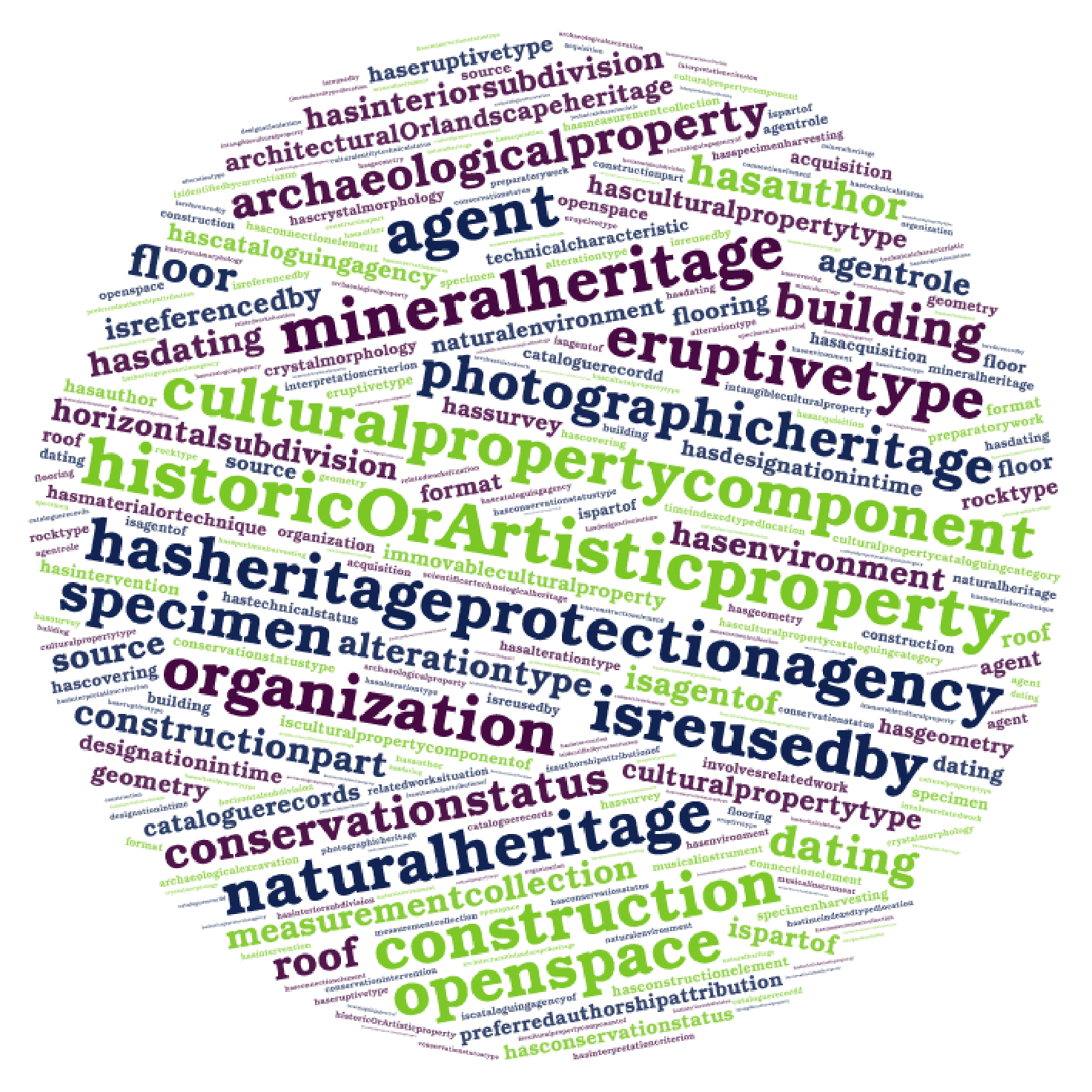}
            \label{fig:viz-subprofile2}
        \end{minipage}
        &
        \begin{minipage}{.14 \textwidth}
            \includegraphics[width=\linewidth]{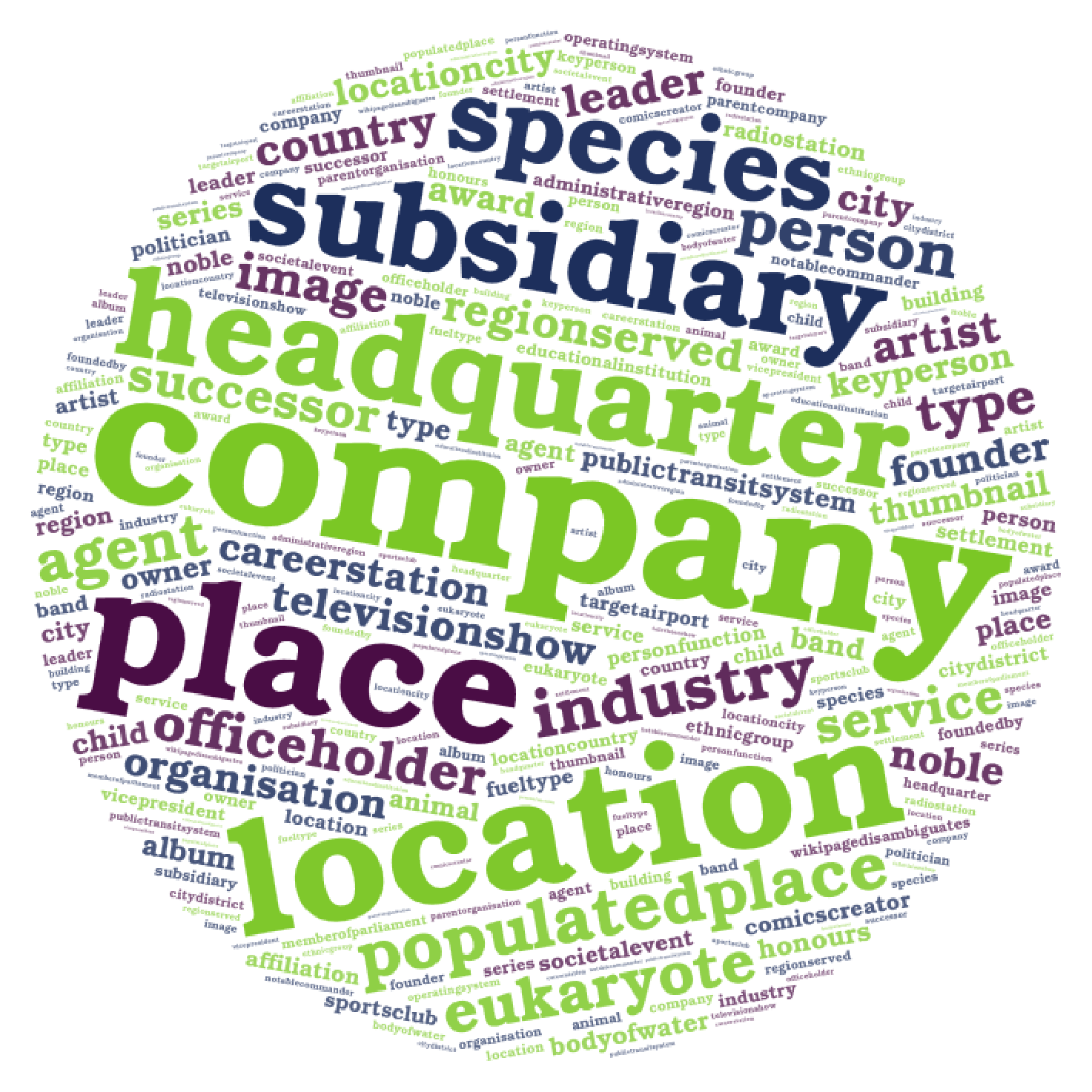}
            \label{fig:viz-subprofile2}
        \end{minipage}
    \end{tabular}

	\caption{Visualizing sub-profiles from $N=10$ patterns of maximal length $\maxLen=5$ drawn by \myalgo}
	\label{fig:viz-subprofile}
\end{figure}

\subsection{Response time for interactive sub-profile discovery} To integrate \myalgo into the framework for user-centric sub-profile discovery, we need fast processing and fast drawing capabilities. This response time depends on the characteristics of the corresponding qDB, particularly the longest transaction, the average transaction size, and the number of transactions. With $\maxLen=5$, the processing times for \DOREMUS, \BENICULTURALI, and \DBpedia are $0.004$ seconds, $0.018$ seconds, and $0.023$ seconds respectively, while the drawing times per pattern are $0.069$ milliseconds, $0.127$ milliseconds, and $0.141$ milliseconds respectively. Hence, it is possible to return a sub-profile with hundreds of nodes and thousands of links in less than $1$ millisecond. It is important to note that in interactive systems, even a small difference between $1$ and $15$ seconds holds significant importance for users. This is particularly true as they engage in multiple iterations over large databases to visualize sub-profiles interactively, which motivates our work. But, for clear visualization in practice, the value of $\maxLen \times N$ should not be too high.
\section{Conclusion}
\label{sec:Conclusion}

This paper introduces a groundbreaking approach—the first of its kind—for extracting sub-profiles from knowledge graph profiles, employing high utility patterns and an output sampling method. With the presentation of two original theorems, we achieve an efficient computation of the total sum of utility patterns for a given quantitative transaction. The concept of Upper Triangle Utility (UTU), embodied by a non-materialized upper triangle matrix, emerges as a pivotal practical element for deducing transaction weights. Building upon the theorems and properties derived from UTU, we unveil effective two-step algorithms tailored for the exploration of vast quantitative databases. Our paper makes significant contributions by providing a robust framework for exploratory data analysis in weighted data, delivering innovative theorems and algorithms for handling large-scale quantitative databases.

{\small
\bibliographystyle{IEEEtran}
\bibliography{QPlus}
}

\end{document}